\documentclass[12pt]{article}
\usepackage[all]{xy}

\usepackage{epic}
\usepackage{amsmath}[1996/11/01]
\usepackage{amssymb,amsthm,amsfonts,latexsym,epsfig,graphics}
 \def\beql#1#2\eeql{\begin{equation}\label{#1}#2\end{equation}}

\textheight = 240 true mm
\advance \textheight by -1 \topmargin
\advance \textheight by -1 \headheight
\advance \textheight by -1 \headsep
\advance \textheight by -2 \footskip
\vsize = \textheight
\textwidth = 142 true mm
\hsize = \textwidth
\addtolength{\oddsidemargin}{-0.5in}
\addtolength{\evensidemargin}{-0.5in}
\addtolength{\topmargin}{-2.\topmargin}


\newcommand{\rscp}{\mbox{\bf {\sf )}}} 
\newcommand{\lscp}{\mbox{\bf {\sf (}}} 

\DeclareMathOperator{\Gal}{Gal}
\DeclareMathOperator{\Rk}{Rk}
\DeclareMathOperator{\Char}{char}

\DeclareMathOperator{\Trace}{Trace}
\DeclareMathOperator{\Beta}{{\mathfrak B}}
\DeclareMathOperator{\Tau}{{\mathcal T}}
\DeclareMathOperator{\tr}{{\sf T}}
\DeclareMathOperator{\mtr}{{\sf -T}}
\DeclareMathOperator{\trace}{trace}

\DeclareMathOperator{\diag}{diag}

\DeclareMathOperator{\GL}{GL}
\DeclareMathOperator{\Aut}{Aut}

\DeclareMathOperator{\id}{id}

\newtheorem{theorem}{Theorem}[section]

\newtheorem{prop}[theorem]{Proposition}

\newtheorem{cor}[theorem]{Corollary}

\newtheorem{rem}[theorem]{Remark}
\newtheorem{lemma}[theorem]{Lemma}
\newtheorem{proposition}[theorem]{Proposition}
\newtheorem{definition}[theorem]{Definition}

\newcommand{\Z}{{\mathbb{Z}}}

\newcommand{\F}{{\mathbb{F}}}
\newcommand{\N}{{\mathbb{N}}}

\newcommand{\MRD}{\rm MRD}

\renewcommand{\em}{\sf}

 \begin{document}
\begin{center}
{\Large {\bf On self-dual $\MRD$ codes}} \\
\vspace{1.5\baselineskip}
{\em Gabriele Nebe\footnote{
Lehrstuhl D f\"ur Mathematik, RWTH Aachen University,
52056 Aachen, Germany, 
 nebe@math.rwth-aachen.de} and Wolfgang Willems\footnote{Otto-von-Guericke Universit\"at, Magdeburg, Germany and Departamento de Matem{\'a}ticas, Universidad del Norte, Barranquilla, Colombia, willems@ovgu.de} }
\end{center}

{\sc Abstract.}
{\small 
We investigate self-dual MRD codes. In particular we prove that a Gabidulin code in $(\F_q)^{n\times n}$ is equivalent to a self-dual code
if and only if its dimension is $n^2/2$, 
$n \equiv 2 \pmod 4$, and $q \equiv 3 \pmod 4$. On the way we determine the full automorphism group of Gabidulin codes in $(\F_q)^{n\times n}$.
\\[1ex]
Keywords:  self-dual MRD code,  automorphism group,  Gabidulin code
\\
MSC: 94B05; 20B25
}

\section{Introduction.}


Following Delsarte \cite{Delsarte}
a rank metric code is a set ${\mathcal C} \subseteq k^{m \times n}$ 
of $m\times n$ matrices over a field $k$. 
The distance between two matrices $A,B \in k^{m \times n}$ is 
defined as $d(A,B):= \Rk (A-B)$, i.e. the rank of the difference of $A$ and $B$.
 As usual we denote 
by
$$d({\mathcal C}) := \min \{ d(A,B) \mid A, B\in {\mathcal C}, A\neq B  \}$$
the minimum distance of
 ${\mathcal C}$. 
The dual code of ${\cal C}$   is 
$${\mathcal C}^{\perp } = \{ X\in k^{m \times n} \mid \lscp C,X \rscp := 
\trace(CX^{\tr} ) = 0 \mbox{ for all }
C \in {\mathcal C} \} $$
where $X^{\tr}$ is the transpose  and $\trace(X)$  the trace of the matrix $X$.
Clearly, ${\mathcal C}^{\perp }$ is always a $k$-linear code, i.e. a subspace 
of the $k$-vector space $k^{m  \times n}$. 

Throughout the paper we assume that $m\geq n$, so our
matrices have at least as many rows as columns. 
We will also assume that ${\mathcal C}$ is a linear code. 
If ${\mathcal C} \leq k^{m \times n}$ has dimension $\ell $ and minimum distance $d$,
then $d\leq n-\ell/m +1$ (see \cite[Theorem 5.4]{Delsarte}, \cite[Theorem 8]{Ravagnani}). Codes where equality holds are called 
$\MRD$ codes 
(maximum rank distance codes). 
By 
 \cite[Theorem 5.4]{Delsarte}
 the dual of an $\MRD$ code is again an $\MRD$ code 
(see also  \cite[Corollary 41]{Ravagnani}).

In this note we investigate self-dual $\MRD $ codes, i.e. $\MRD$ codes
${\mathcal C}$ with ${\mathcal C} = {\mathcal C}^{\perp} $.
As $\dim ({\mathcal C} ) + \dim ({\mathcal C}^{\perp }) = \dim (k^{m\times n}) = mn $ 
a self-dual $\MRD $ code 
${\mathcal C} \leq k^{m \times n}$ 
 with $m\geq n$ has dimension
$\frac{mn}{2}$ and minimum distance $d({\mathcal C}) = \frac{n}{2}+1$.

Section \ref{AUTO} investigates 
which rank distance preserving linear automorphisms of $k^{m\times n}$ 
stabilise the inner product $\lscp - , - \rscp$ defined by $\lscp A,B \rscp = \trace (AB^{\tr})$. 

This inner product $\lscp - , - \rscp $ is the standard inner
product if we identify $k^{m\times n} $ with $k^{1\times mn}$. 
If $\Char (k)  =2 $, then self-dual codes in $k^{1\times mn}$
always contain the all-ones vector. So self-dual rank metric
codes contain the all-ones matrix $J \in \{ 1 \}^{m\times n}$ of rank 1.
This implies that
 there are no self-dual $\MRD$ codes over fields of characteristic 2
(see Theorem \ref{char2}).
In Section \ref{equivalent} we give in odd characteristic a handy criterion to prove if a given rank metric code in $k^{m\times n}$ is
equivalent to a self-dual code (see Theorem \ref{eqsd}).

In the rest  of the paper we study $\MRD$ codes in $k^{n\times n}$ where
$k$ is a finite field.
In case $n=2$  all self-dual $\MRD $ codes are classified in
Section \ref{SectionSelfDualMRD}: They exist 
if and only if $-1$ is not a square in $k$. 

The most well-studied examples of $\MRD $ codes are the 
Gabidulin codes (\cite{Gabidulin}, \cite{Delsarte}). 
Section \ref{GABI} treats Gabidulin codes 
 of full length $n$, i.e. $n=[K:k]$ is the degree of the field extension, as $k$-linear
subspaces of dimension $\ell n$  of $k^{n\times n}$. 
We determine the $k$-linear automorphism group of these codes
(see Corollary \ref{fullaut}) and
show that such a Gabidulin code  
is equivalent to a self-dual code if and only if $n \equiv 2 \pmod{4}$,
$\ell = n/2$, 
and $-1$ is not a square in $k$ (see Theorem \ref{gabisd}).

If $-1$ is a square in $k$ or $n$ is a multiple of 4, we do not have any 
examples of self-dual $\MRD $ codes in $k^{n\times n}$. 
 Note that
according to \cite{MorrisonThesis} there are 5 equivalence classes of
self-dual $\MRD$ codes in $\F_5^{4\times 2}$.

\section{Automorphisms preserving the inner pro\-duct.} \label{AUTO}

The rank distance preserving automorphisms of $k^{m \times n}$ are 
$$ \kappa _{X,Y,Z,\sigma}:   A\mapsto XA^{\sigma }Y+Z   \mbox{ with }
X\in \GL_m(k),Y\in \GL_n(k), Z\in k^{m \times n}, \sigma\in \Aut(k)  $$ or $$
\tau _{X,Y,Z,\sigma}:    A\mapsto XA^{\tr,\sigma }Y+Z  \mbox{ with } X,Y\in \GL_n(k), Z\in k^{m \times n} , \sigma \in \Aut(k) \ 
(\mbox{if } m=n)  $$
and these are $k$-linear, if and only if $Z=0$ and $\sigma = \id $ (see \cite{Wan}, Theorem 3.4). 
If $m=n$, then the $\tau_{X,Y} := \tau _{X,Y,0,{\rm \small id}}$ are called {\em improper} and
the $\kappa_{X,Y}:=\kappa _{X,Y,0, {\rm \small id} }$ {\em proper} automorphisms.
The group of proper automorphisms is a normal subgroup of index
2 in the full automorphism group of $k^{n\times n}$. \\

\begin{rem}\label{dualequiv} {\rm
For $C,D\in k^{m\times n}$, $X\in \GL_m(k)$, and $Y\in \GL_n(k)$  we compute
$$\trace(XCY(X^{\mtr} D Y^{\mtr}) ^{\tr} ) = 
\trace(XCYY^{-1} D^{\tr} X^{-1})  = \trace(CD^{\tr}) $$ 
where the second equality follows from the fact that the trace is invariant under conjugation.
This 
shows that 
$$\kappa _{X,Y}  ({\mathcal C}) ^{\perp } =
\kappa _{X^{\mtr}, Y^{\mtr}} ({\mathcal C}^{\perp} ) .$$}
\end{rem}

By $I_n \in k^{n \times n}$ we denote the identity matrix.
The following result, which we do not need in the rest of this note, is of its own interest.

\begin{prop}
The rank distance preserving linear automorphisms of $ k^{m \times n}$ that 
preserve the inner product $\lscp -,- \rscp $  are exactly  the maps
$\kappa _{X,Y} $  and, if $m=n$,  $\tau _{X,Y}$ where
$$ X^{\tr}X = a I_m, \ YY^{\tr} = a^{-1}I_n \ \mbox{and} \ a \in  k^{\times }\mbox{, the multiplicative group of } k .$$
\end{prop}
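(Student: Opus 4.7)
The plan is to classify which of the rank distance preserving $k$-linear automorphisms -- already listed above as the proper maps $\kappa_{X,Y}$ and, when $m=n$, the improper maps $\tau_{X,Y}$ -- also fix the form $\lscp -,-\rscp$. A useful first reduction is to notice that, when $m=n$, the transposition map $T\colon A\mapsto A^{\tr}$ is itself rank distance preserving and satisfies $\lscp A^{\tr},B^{\tr}\rscp = \trace(A^{\tr}B) = \trace(BA^{\tr}) = \lscp A,B\rscp$; since $\tau_{X,Y} = \kappa_{X,Y}\circ T$, the improper case will follow immediately once the proper one is settled, so I focus on $\kappa_{X,Y}$.

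Next, direct computation together with cyclicity of the trace gives
\[
\lscp \kappa_{X,Y}(A),\kappa_{X,Y}(B)\rscp
= \trace(XAY\,Y^{\tr}B^{\tr}X^{\tr})
= \trace\bigl((X^{\tr}X)\,A\,(YY^{\tr})\,B^{\tr}\bigr).
\]
Writing $P:=X^{\tr}X$ and $Q:=YY^{\tr}$, preservation of the inner product is therefore equivalent to $\lscp PAQ,B\rscp = \lscp A,B\rscp$ for every pair $A,B \in k^{m\times n}$. Since $\lscp-,-\rscp$ is non-degenerate, this in turn is equivalent to the single matrix identity
\[
PAQ = A \qquad \text{for all } A\in k^{m\times n}.
\]

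It remains to show that this identity forces $P=aI_m$ and $Q=a^{-1}I_n$ for some $a\in k^{\times}$, which is the main substantive step of the proof. I would handle it by testing the identity on the elementary matrices $E_{ij}$: an easy index calculation yields $(PE_{ij}Q)_{rs}=P_{ri}Q_{js}$, and requiring this to equal $\delta_{ri}\delta_{sj}$ for every quadruple $(i,j,r,s)$ is very rigid. Specialising $r=i$ with $s\neq j$ kills the off-diagonal entries of $Q$, and symmetrically $s=j$ with $r\neq i$ kills those of $P$; specialising $r=i,s=j$ then shows that $P_{ii}$ and $Q_{jj}$ are constants $p$ and $q$ with $pq=1$, so setting $a:=p$ gives the stated conditions. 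The converse is immediate by substituting $P=aI_m,\ Q=a^{-1}I_n$ into the trace formula, and combining with the transposition reduction handles $\tau_{X,Y}$, completing the proof. The only real obstacle is conceptual rather than computational: recognising that the orthogonality condition decouples into an identity on $k^{m\times n}$ on which elementary matrices can bite.
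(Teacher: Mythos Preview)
Your proof is correct and follows essentially the same approach as the paper's: both reduce the improper case to the proper one via transposition (your $T$ is exactly the paper's $\tau_{I_n,I_n}$), both compute $\lscp\kappa_{X,Y}(A),\kappa_{X,Y}(B)\rscp=\trace(X^{\tr}XA\,YY^{\tr}B^{\tr})$, and both then evaluate on elementary matrices to obtain the same identity $P_{ri}Q_{js}=\delta_{ri}\delta_{sj}$ (with your $(r,s)$ playing the role of the paper's $(l,h)$). The only cosmetic difference is that you pass through non-degeneracy to the intermediate identity $PAQ=A$ before testing on $E_{ij}$, whereas the paper plugs $A=E_{ij}$, $B=E_{lh}$ directly into the trace equality; the resulting equations and their analysis are identical. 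One tiny point of ordering: when you say ``$r=i$, $s\neq j$ kills the off-diagonal entries of $Q$'' you are implicitly using $P_{ii}\neq 0$, which only follows from the diagonal case $P_{ii}Q_{jj}=1$; doing that case first (as the paper does) makes the logic airtight.
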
 
\begin{proof}
The maps $\kappa _{X,Y} $  and, if $m=n$, also $\tau _{X,Y}$, where $X \in \GL(m,k)$ and $Y \in \GL(n,k)$, are the linear automorphisms
which preserve the rank distance on $k^{m \times n}$. 
If $n=m$ then $\tau _{I_n,I_n}$ is an improper automorphism that 
preserves the inner product and $\tau _{X,Y} \tau _{I_n,I_n} = \kappa _{X,Y}$,
 so it is enough to deal with proper automorphisms.  \\
Suppose that $\kappa _{X,Y} $ preserves the inner product. 
Thus, for $A,B \in k^{m \times n}$ we have
$$ \trace(AB^{\tr}) = \trace(XAY(XBY)^{\tr}) = \trace((X^{\tr}XA)(YY^{\tr}B^{\tr})).$$
We put $(x_{ij}) = X^{\tr}X$ , $(y_{ij}) = YY^{\tr}$ and denote
 by $E_{ij} \in k^{m \times n}$ the matrix which has a $1$ at position $(i,j)$ and $0$'s elsewhere.
Hence, for $A=E_{ij}$ and $B= E_{lh}$, we obtain the equation
$$ x_{li}y_{jh} = \trace((X^{\tr}XE_{ij})(YY^{\tr}E_{hl})) = \trace(E_{ij}E_{hl}) = \delta_{il}\delta_{jh}.$$
In particular,
$$  y_{jj} = x_{ii}^{-1}$$
for all $i,j$. 
Furthermore, $ \delta_{jh} = x_{ii}y_{jh}$ forces $y_{jh} =0$ for $j \not= h$. By the same argument we get $x_{li}= 0$ for $l \not= i$.
This shows $X^{\tr}X = aI_m$ with $a \in k^{\times }$ and $ YY^{\tr} = a^{-1}I_n$.
\\[1ex]
Since the given maps obviously preserve both the rank distance and the inner product the proof is complete.
\end{proof}

\begin{rem} {\rm
As one referee pointed out, 
the same proof shows that the linear rank distance preserving automorphisms that 
yield similarities for the inner product  $\lscp -,- \rscp $ (which is enough to preserve
the notion of duality) are the maps 
$\kappa _{X,Y} $  and, if $m=n$,  $\tau _{X,Y}$ where
$ X^{\tr}X = a I_m, \ YY^{\tr} = bI_n ,  a,b \in  k^{\times }$.
}
\end{rem}

\section{Self-dual $\MRD$ codes}  \label{SectionSelfDualMRD}

Surprisingly, in characteristic $2$ self-dual $\MRD $ codes 
in $k^{m  \times n}$ do not exists. 
This follows immediately from
the following easy, but crucial result, since a self-dual $\MRD$ code in $k^{m  \times n}$ has at least minimum distance $2$.

\begin{theorem} \label{char2}
Assume that $\Char(k) =2$ and let ${\mathcal C} \subseteq {\mathcal C}^{\perp } \leq 
k^{m \times n}$
be a self-orthogonal code. Then the all-ones matrix $J$ is in ${\mathcal C}^{\perp }$. In particular,  $d({\mathcal C}^{\perp } ) = 1$. 
\end{theorem}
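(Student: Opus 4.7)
The plan is to exploit the fact that in characteristic $2$ the Frobenius map $a \mapsto a^2$ is additive. First I would unpack the self-orthogonality hypothesis on the trace inner product: for every $C = (c_{ij}) \in \mathcal{C}$ we have
\[
0 = \lscp C, C \rscp = \trace(CC^{\tr}) = \sum_{i=1}^{m}\sum_{j=1}^{n} c_{ij}^{2},
\]
since the $(i,i)$ entry of $CC^{\tr}$ is $\sum_{j} c_{ij}^{2}$.

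Next I would apply the key characteristic-$2$ observation. Because the Frobenius endomorphism is additive on $k$, we can rewrite the above sum as $\bigl(\sum_{i,j} c_{ij}\bigr)^{2} = 0$, hence $\sum_{i,j} c_{ij} = 0$. But this sum is exactly $\trace(C J^{\tr}) = \lscp C, J \rscp$, where $J \in \{1\}^{m \times n}$ is the all-ones matrix. Since this vanishes for every $C \in \mathcal{C}$, we conclude $J \in \mathcal{C}^{\perp}$.

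Finally, since $J$ is a nonzero rank-$1$ matrix lying in $\mathcal{C}^{\perp}$, the distance between $0 \in \mathcal{C}^{\perp}$ and $J \in \mathcal{C}^{\perp}$ is $\Rk(J) = 1$, giving $d(\mathcal{C}^{\perp}) = 1$.

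There is essentially no obstacle here; the proof is a one-line manipulation once one notices the Frobenius identity $\sum c_{ij}^{2} = (\sum c_{ij})^{2}$ in characteristic $2$. The only small point of care is to record that $J$ is nonzero (so that the distance is exactly $1$ rather than $0$) and to note that the conclusion on the $\MRD$ side, used right before the theorem, is that any self-dual $\MRD$ code would need minimum distance $\tfrac{n}{2}+1 \geq 2$, which contradicts $d(\mathcal{C}^{\perp}) = 1$.
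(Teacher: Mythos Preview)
Your proof is correct and follows exactly the same line as the paper's: both compute $\lscp C,C\rscp=\sum_{i,j}c_{ij}^{2}$, use the Frobenius identity in characteristic~$2$ to rewrite this as $\bigl(\sum_{i,j}c_{ij}\bigr)^{2}=\lscp C,J\rscp^{2}$, and conclude that $J\in\mathcal{C}^{\perp}$ has rank~$1$. There is nothing to add.
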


\begin{proof}
All elements $A\in {\mathcal C}$ satisfy 
$$0=\lscp A,A \rscp = 
\sum _{i=1}^m \sum _{j=1}^n A_{ij}^2 = 
(\sum _{i=1}^m \sum _{j=1}^n A_{ij})^2 = \lscp A , J \rscp ^2 $$
where $J $ is the all-ones matrix, which is of rank 1.
So $J\in {\mathcal C}^{\perp }$ satisfies $d(0,J) = 1$.
\end{proof}

In contrast to the characteristic $2$ case self-dual $\MRD$ codes may exist if $\Char(k)$ is odd. To see that we characterize all
self-dual $\MRD$ codes $\mathcal C$ in $k^{2 \times 2}$ where $k=\F_q$ is the
finite field with $q$ elements.
 Since $d({\mathcal C}) = 2$ and $\dim ({\mathcal C}) = 2$,
the projection on the first row 
$$\pi : {\mathcal C} \to k^{1 \times 2}, \quad A \mapsto (a_{11}, a_{12}) $$
is an isomorphism and ${\mathcal C}$ has a unique basis of the 
form 
$$ A = \left(\begin{array}{cc} 1 & 0 \\ a & b \end{array} \right), \ 
 B = \left(\begin{array}{cc} 0 & 1 \\ c & d \end{array} \right)  $$
with $a,b,c,d\in k$.

\begin{prop} \label{dim2}
${\mathcal C} = \langle A,B \rangle $ is a self-dual $\MRD$ code
if and only if  the following two conditions hold true.
\begin{itemize} \item[\rm (i)] $-1 \not\in (k^{\times })^2 $, i.e.,  $q \equiv 3 \pmod{4}$. 
\item[\rm (ii)]
$a^2+b^2 = -1 \mbox{ and } (c,d) \in \{ (-b,a), (b,-a ) \}. $
\end{itemize} 
\end{prop}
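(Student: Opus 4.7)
The approach is to translate both conditions into explicit algebra: self-orthogonality becomes three quadratic equations in $a,b,c,d$, while the MRD property becomes the anisotropy of a single binary quadratic form. Solving the first system under the nonvanishing constraints supplied by the second yields (ii), and the anisotropy criterion then forces (i). In view of Theorem~\ref{char2} we may implicitly assume $\Char(k)$ is odd.

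First I would compute the three relevant inner products, obtaining $\lscp A,A \rscp = 1+a^2+b^2$, $\lscp B,B \rscp = 1+c^2+d^2$, and $\lscp A,B \rscp = ac+bd$. Since $\dim \mathcal{C} = 2 = \frac{1}{2}\dim k^{2\times 2}$, self-orthogonality is equivalent to self-duality, so $\mathcal{C}$ is self-dual if and only if
\[
a^2+b^2 = -1, \qquad c^2+d^2 = -1, \qquad ac+bd = 0. \quad (\ast)
\]
For the MRD requirement, every nonzero $xA+yB$ must have full rank; a direct calculation yields
\[
\det(xA+yB) = bx^2 + (d-a)xy - cy^2,
\]
so taking $(x,y)=(1,0)$ and $(0,1)$ forces $b,c \neq 0$, and with $b \neq 0$ this binary quadratic form is anisotropic if and only if its discriminant $\Delta := (d-a)^2 + 4bc$ is a non-square in $k^{\times}$.

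Using $b \neq 0$, the equation $ac+bd=0$ gives $d = -ac/b$; substituting into $c^2+d^2=-1$ and simplifying with the help of $a^2+b^2=-1$ yields $c^2 = b^2$, so $c = \pm b$ and correspondingly $d = \mp a$, which is precisely condition (ii). A short discriminant calculation then gives $\Delta = -4$ in the case $(c,d) = (b,-a)$ and $\Delta = -4b^2$ in the case $(c,d) = (-b,a)$; since $4$ and $b^2$ are squares, each of these is a non-square in $\F_q$ if and only if $-1$ is, i.e., if and only if $q \equiv 3 \pmod 4$, establishing (i). The converse direction is then a one-line verification: given (i) and (ii), the system $(\ast)$ holds by direct substitution and the same discriminant analysis shows anisotropy, hence the MRD property. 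I do not anticipate a real obstacle; the one point that needs care is that MRD is what supplies $b \neq 0$, and without this the self-duality system $(\ast)$ cannot be collapsed to the two candidate solutions in (ii).
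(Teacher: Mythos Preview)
Your proof is correct and follows essentially the same approach as the paper. The only minor variations are that the paper solves the self-duality system via the polynomial identity $a^2(c^2+d^2+1) - d^2(a^2+b^2+1) + (bd-ac)(ac+bd) = a^2 - d^2$ (obtaining $a=\pm d$ without first invoking $b\neq 0$), and phrases the MRD check as irreducibility of $\det(A+xB)\in k[x]$ rather than anisotropy of the binary form---both reducing to the same discriminant computation.
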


\begin{proof}
Assume that ${\mathcal C} = \langle A,B \rangle $ is a self-dual code.
Then 
$\lscp A,A \rscp = \lscp A,B \rscp = \lscp B,B \rscp = 0$ yields the equations
$$a^2+b^2 +1=  c^2+d^2+1 =  ac+bd = 0 .$$
The ideal in $\Z[a,b,c,d]$ generated by these three polynomials contains the element
$$ a^2( c^2+d^2+1) -d^2(a^2+b^2+1) +(bd-ac)(ac+bd) = a^2-d^2 = (a+d)(a-d).$$
We therefore conclude that $a=\pm d$ and similarly $b=\pm c$. Thus 
condition (ii) is equivalent to ${\mathcal C}$ being self-dual. 
Moreover ${\mathcal C}$ is an $\MRD$ code, if all non-zero matrices in ${\mathcal C}$ 
have determinant $\neq 0$, so if and only if $b\neq 0$ and 
$$\det (A +xB ) = \left\{ \begin{array}{rl} (x^2+1) b, & \mbox{if} \  (c,d) = (-b,a) \\ 
-(x^2+2\frac{a}{b}x-1)b, & \mbox{if} \ (c,d) = (b,-a) \end{array} \right. $$
is an irreducible polynomial in $k[x]$. 
Using the fact that $a^2+b^2=-1$ we see that in both cases this leads to the 
condition that $-1$ is not a square in $k$, so $q \equiv 3 \pmod{4}$. Thus we have  (i). 
With the same computations 
as above we see that the conditions in (i) and (ii) lead to a self-dual $\MRD$ code.
\end{proof}

It is easy to see that all these codes are pairwise equivalent and
that they are equivalent to  Gabidulin codes of full length. 
So Proposition \ref{dim2} may be seen as a special case of Theorem \ref{gabisd} below.

\section{A criterion to be equivalent to a self-dual code} \label{equivalent}

\begin{definition} {\rm
Two linear rank metric codes ${\mathcal C} $ and ${\mathcal D} \leq k^{m\times n}$ are called {\em properly equivalent}, if 
there are $X \in \GL_m(k)$, $Y\in \GL_n(k)$ such that 
${\mathcal D}= X {\mathcal C} Y $. }
\end{definition} 

Note that proper equivalence is the usual notion of linear equivalence 
for  $m\neq n$. Only for $m=n$ the proper equivalences form a subgroup of
index 2 in the group of linear equivalences; the latter also
includes transposition of matrices (see Section \ref{AUTO}).

\begin{lemma} \label{qf} 
Let $k$ be a finite field of odd characteristic 
and let $A\in k^{n\times n} $ be a symmetric matrix of full rank. Then
there is a matrix $X\in \GL_n(k)$ such that $A= X X^{\tr}$ if and only if 
$\det (A) \in (k^{\times })^2 $. 
\end{lemma}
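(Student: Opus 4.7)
The forward direction is immediate: if $A = XX^{\tr}$ then $\det(A) = \det(X)^2$ is a non-zero square. For the converse, I would reformulate $A = XX^{\tr}$ as $A = X I_n X^{\tr}$, i.e., $A$ is congruent to the identity, and appeal to the classification of non-degenerate symmetric bilinear forms over a finite field of odd characteristic: the congruence class of $A$ is determined by the square class of $\det(A)$ in $k^{\times}/(k^{\times})^2$.

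To carry out this classification, the plan has two steps. First, a Gram--Schmidt argument in characteristic $\neq 2$ reduces $A$ to a diagonal matrix $\diag(a_1,\ldots,a_n)$ with $a_i \in k^{\times}$: induct on $n$ after finding an anisotropic vector (via polarisation if no standard basis vector works) and passing to its orthogonal complement. Second, the sum-of-two-squares lemma for $\F_q$ with $q$ odd --- the sets $\{x^2 : x \in k\}$ and $\{c - y^2 : y \in k\}$ each have $(q+1)/2$ elements, so they intersect for every $c \in k$ --- ensures that $as^2 + bt^2 = 1$ has a solution for all $a,b \in k^{\times}$. This realises a $2 \times 2$ congruence transforming $\diag(a,b)$ into $\diag(1,ab)$, and iterating reduces $\diag(a_1,\ldots,a_n)$ to $\diag(1,\ldots,1,d)$ with $d$ in the same square class as $\det(A)$. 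Under the hypothesis $\det(A) = s^2$, a final rescaling of the last coordinate by $s^{-1}$ produces $I_n$, giving the required $X \in \GL_n(k)$.

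The main obstacle is purely bookkeeping: chaining the explicit basis changes from each step into a single $X$ with $A = XX^{\tr}$. All the ingredients are classical.
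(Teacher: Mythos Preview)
Your proposal is correct and follows essentially the same approach as the paper: both reduce the statement to the classification of non-degenerate quadratic forms over a finite field of odd characteristic by dimension and square class of the determinant. The only difference is that the paper simply cites this classification (Scharlau, Chapter~2, Theorem~3.8), whereas you sketch its proof via diagonalisation and the sum-of-two-squares counting argument; your extra detail is sound but not needed for the paper's purposes.
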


\begin{proof}
Regular quadratic forms 
over finite fields of odd characteristic are classified by 
their dimension and their determinant 
(see for instance \cite[Chapter 2, Theorem 3.8]{Scharlau}).
In particular a quadratic form with Gram matrix $A\in \GL_n(k)$
 is equivalent to the
standard form with Gram matrix $I_n$ if and only if $\det(A) $ is a square.
\end{proof}

\begin{theorem}\label{eqsd}
Let $k$ be a finite field of odd characteristic and let
${\mathcal C} \leq k^{m\times n}$ be a linear rank metric code.
Then 
${\mathcal C}$ is properly equivalent to a self-dual code if and only if
there are symmetric matrices $A = A^{\tr}\in k^{m\times m}$ and 
$B=B^{\tr} \in k^{n\times n}$ such that $\det(A) $, 
$\det(B) \in (k^{\times })^2$ are non-zero squares with
$${\mathcal C}^{\perp} = A {\mathcal C} B .$$
\end{theorem}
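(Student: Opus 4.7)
The plan is to deduce both directions by assembling two earlier tools: Remark \ref{dualequiv}, which gives $(X{\mathcal C}Y)^{\perp}=X^{\mtr}{\mathcal C}^{\perp}Y^{\mtr}$, and Lemma \ref{qf}, which identifies symmetric matrices with square determinant over an odd-characteristic finite field as the Gram matrices $XX^{\tr}$. The theorem then amounts to a direct translation between pairs $(U,V)\in\GL_m(k)\times\GL_n(k)$ witnessing proper equivalence of ${\mathcal C}$ to a self-dual code and pairs $(A,B)$ witnessing the displayed identity ${\mathcal C}^{\perp}=A{\mathcal C}B$ with the prescribed symmetry and determinant conditions.

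For the forward implication I would start from ${\mathcal D}:=U{\mathcal C}V$ self-dual. Applying Remark \ref{dualequiv} to ${\mathcal D}={\mathcal D}^{\perp}$ gives $U{\mathcal C}V=U^{\mtr}{\mathcal C}^{\perp}V^{\mtr}$; multiplying by $U^{\tr}$ on the left and by $V^{\tr}$ on the right, and using $U^{\tr}U^{\mtr}=I_m$ and $V^{\mtr}V^{\tr}=I_n$, yields ${\mathcal C}^{\perp}=(U^{\tr}U)\,{\mathcal C}\,(VV^{\tr})$. Setting $A:=U^{\tr}U$ and $B:=VV^{\tr}$ finishes this direction: they are visibly symmetric and have determinants $\det(U)^{2}$ and $\det(V)^{2}$, both non-zero squares in $k^{\times}$. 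Notably this half does not invoke the finite-field hypothesis.

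For the converse, Lemma \ref{qf} supplies $P\in\GL_m(k)$ and $Q\in\GL_n(k)$ with $A=PP^{\tr}$ and $B=QQ^{\tr}$. I would propose ${\mathcal D}:=P^{\tr}{\mathcal C}Q$ as the self-dual code properly equivalent to ${\mathcal C}$, and verify self-duality by a direct computation: since $(P^{\tr})^{\mtr}=P^{-1}$, Remark \ref{dualequiv} gives ${\mathcal D}^{\perp}=P^{-1}{\mathcal C}^{\perp}Q^{\mtr}$, and substituting ${\mathcal C}^{\perp}=PP^{\tr}{\mathcal C}QQ^{\tr}$ cancels the inverse pairs $P^{-1}P$ and $Q^{\tr}Q^{\mtr}=I_n$, collapsing the expression to $P^{\tr}{\mathcal C}Q={\mathcal D}$.

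Apart from careful bookkeeping of transposes versus inverses (in particular the identity $(P^{\tr})^{\mtr}=P^{-1}$ that meshes the exponents coming out of Remark \ref{dualequiv} with the Gram factorisations of $A$ and $B$), there is no genuine obstacle here: the substantive input, namely the existence of Gram factorisations of symmetric matrices of square determinant, is entirely encapsulated in Lemma \ref{qf}, and the rest is routine linear algebra.
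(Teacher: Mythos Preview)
Your proposal is correct and follows essentially the same route as the paper. The only cosmetic difference is that you invoke Remark \ref{dualequiv} as a black box, whereas the paper redoes the corresponding trace computation inline; in both directions the proof produces $A=U^{\tr}U$, $B=VV^{\tr}$ (forward) and, via Lemma \ref{qf}, the self-dual code $P^{\tr}{\mathcal C}Q$ (converse), exactly as you do.
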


\begin{proof}
Assume that there are $X \in \GL_m(k)$, $Y\in \GL_n(k)$ such that 
${\mathcal D}:= X {\mathcal C} Y = {\mathcal D}^{\perp }$. 
Then  for all $C_1,C_2\in {\mathcal C}$ we have 
$$ 0 = \trace (X C_1 Y (X C_2 Y)^{\tr} ) = 
\trace (X C_1 Y Y^{\tr} C_2 ^{\tr}  X^{\tr} ) = 
\trace (X^{\tr} X C_1 Y Y^{\tr} C_2 ^{\tr}  )  $$ 
Put $A:= X^{\tr } X$ and $B:= Y Y^{\tr} $.
Then $A$ and $B$ are symmetric of square determinant and
${\mathcal C}^{\perp } = A {\mathcal C} B $. 
\\
On the other hand assume that there are $A,B$ as stated in the theorem.
According to  Lemma \ref{qf} there are 
$X \in \GL_m(k)$, $Y\in \GL_n(k)$ such that 
$A=X^{\tr} X $, $B=Y Y^{\tr}$. 
The same computation as above 
shows that 
$X{\mathcal C} Y $ is a self-dual code.
\end{proof}

\section{Gabidulin codes in $k^{n \times n}$} \label{GABI}

We keep the assumption that $k=\F_q $ is a finite field, 
but allow $\Char(k)$ to be arbitrary (even or odd).
Let $K:= \F_{q^n}$ be the degree $n$ extension field of $k$.
For $\alpha \in K$ and $0\leq i \leq n-1$ we define 
$\alpha ^{[i]} := \alpha ^{q^i} $ to be the image of $\alpha $ under
the $i$-th iteration of the Frobenius automorphism of $K/k$ and 
$\Trace _{K/k} (\alpha ):= \sum _{i=0}^{n-1} \alpha ^{[i]} $. 
For a 
 $k$-basis $\Beta :=(\beta _1,\ldots , \beta _n)$ of $K$  the dual 
basis $\Beta ^*:=(\beta _1^*,\ldots , \beta _n^*)$ is defined by the 
property that $\Trace _{K/k} (\beta _i \beta _j^*) = \delta _{ij} $.
If $\beta _i^* = \beta _i$ for all $i$, then the basis $\Beta $ is 
called a self-dual basis. 
Note that a dual basis always exists, but a self-dual basis exists if and only if $q$ is even or both $q$ and $n$ are odd (see \cite{LS}).
Let $\Tau _{\Beta }:= (\Trace _{K/k} (\beta_i \beta_j))_{i,j=1,...,n} $ denote the Gram matrix 
of the trace bilinear form 
$(\alpha,\beta)\in K\times K \mapsto \Trace _{K/k} (\alpha \beta ) \in k $
 with respect to the basis $\Beta $.
Then $\Tau _{\Beta }$ is the base change matrix between $\Beta $ and its 
dual basis $\Beta ^*$, because if $\beta _i = \sum _{m=1}^n a_{mi} \beta _m^* $ 
with $a_{mi} \in k$ for all $m$, then 
$$(\Tau _{\Beta }) _{ji} = 
\Trace _{K/k} (\beta_j \beta_i) = \sum _{m=1}^n a_{mi} \Trace _{K/k} (\beta _j \beta_m^* ) 
= a_{ji} .$$
In the notation of the next definition $\Tau _{\Beta } = \epsilon _{\Beta ^*} (\Beta )$.

\begin{definition} {\rm 
Let $\Beta = (\beta _1,\ldots , \beta _n) \in K^n$ be a $k$-basis of $K$
and define the map $\epsilon _{\Beta } : K^{1\times n}\to k^{n\times n} $ by
$$\epsilon_{\Beta } (\alpha _1,\ldots , \alpha _n) := (a_{ij}) \in k^{n\times n} \  \mbox{ if } 
\alpha _j = \sum _{i=1}^n a_{ij} \beta _i .$$
For $\alpha \in K$ and $\ell \in \N_0$ 
we also put $\alpha \Beta := (\alpha \beta_1,\ldots , \alpha \beta _n) \in K^n$ and 
$\alpha ^{[\ell ]} := \alpha ^{(q^\ell )} $ respectively 
$\Beta ^{[\ell ]} := (\beta _1^{[\ell ]} ,\ldots , \beta _n^{[\ell ]} ) $. }
\end{definition}

\begin{lemma}\label{basechange}
For any $\alpha \in K$ we have 
$$\epsilon _{\Beta } ( \alpha \Beta ) ^{\tr } 
= \epsilon _{\Beta ^*} (\alpha \Beta ^*) = \Tau _{\Beta } 
\epsilon _{\Beta }(\alpha \Beta ) \Tau _{\Beta }^{-1} .$$
\end{lemma}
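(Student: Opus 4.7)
The plan is to read both sides as two different matrix realizations of the same $k$-linear endomorphism of $K$, namely the multiplication-by-$\alpha$ map $m_\alpha : K \to K$, $x \mapsto \alpha x$. By the definition of $\epsilon_\Beta$, the matrix $\epsilon_\Beta(\alpha\Beta)$ is precisely the matrix of $m_\alpha$ in the basis $\Beta$, and analogously $\epsilon_{\Beta^*}(\alpha\Beta^*)$ is the matrix of $m_\alpha$ in the dual basis $\Beta^*$. The two equalities will therefore come out of (i) a change-of-basis computation and (ii) the self-adjointness of $m_\alpha$ with respect to the trace bilinear form.

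For the second equality, I would use that $\Tau_\Beta$ was just identified as the base change matrix expressing $\Beta$ in the coordinates of $\Beta^*$; concretely, $\beta_i = \sum_m (\Tau_\Beta)_{mi}\,\beta_m^*$, which in row-vector form reads $\Beta = \Beta^*\,\Tau_\Beta$. Substituting into $\alpha\Beta = \Beta\,\epsilon_\Beta(\alpha\Beta)$ and then multiplying on the right by $\Tau_\Beta^{-1}$ yields
$$\alpha\Beta^* \;=\; \Beta^*\,\Tau_\Beta\,\epsilon_\Beta(\alpha\Beta)\,\Tau_\Beta^{-1},$$
and comparing with $\alpha\Beta^* = \Beta^*\,\epsilon_{\Beta^*}(\alpha\Beta^*)$ gives $\epsilon_{\Beta^*}(\alpha\Beta^*) = \Tau_\Beta\,\epsilon_\Beta(\alpha\Beta)\,\Tau_\Beta^{-1}$.

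For the first equality, I would extract the matrix entries via the trace pairing. Writing $\epsilon_\Beta(\alpha\Beta) = (a_{ij})$ and pairing the defining relation $\alpha\beta_j = \sum_\ell a_{\ell j}\beta_\ell$ with $\beta_i^*$ under $\Trace_{K/k}$ gives
$$a_{ij} \;=\; \Trace_{K/k}(\alpha\,\beta_j\,\beta_i^*).$$
The analogous computation for $\epsilon_{\Beta^*}(\alpha\Beta^*) = (b_{ij})$, pairing with $\beta_i$, yields $b_{ij} = \Trace_{K/k}(\alpha\,\beta_j^*\,\beta_i)$. Since multiplication in $K$ is commutative, $a_{ij} = b_{ji}$, which is exactly $\epsilon_\Beta(\alpha\Beta)^{\tr} = \epsilon_{\Beta^*}(\alpha\Beta^*)$.

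The proof is essentially bookkeeping, so there is no real obstacle; the only thing to be careful with is the transpose convention in the definition of $\epsilon_\Beta$ (columns are coordinates of the input entries) and the direction of the base change encoded by $\Tau_\Beta$. Once those are pinned down consistently, both equalities drop out of one line each.
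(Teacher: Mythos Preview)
Your proof is correct. For the second equality your change-of-basis argument is exactly the paper's, only packaged in the row-vector identity $\Beta = \Beta^*\,\Tau_\Beta$ instead of being written out entrywise.

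For the first equality your route differs from the paper's. You read off both matrix entries via the trace pairing, obtaining $a_{ij}=\Trace_{K/k}(\alpha\beta_j\beta_i^*)$ and $b_{ij}=\Trace_{K/k}(\alpha\beta_j^*\beta_i)$, and then invoke commutativity of $K$ to conclude $a_{ij}=b_{ji}$; this is the statement that $m_\alpha$ is self-adjoint for the trace form. The paper instead introduces $C:=\epsilon_{\Beta^*}(\alpha^{-1}\Beta^*)$, shows $BC^{\tr}=I_n$ via the same trace computation, and then uses that $C^{-1}=\epsilon_{\Beta^*}(\alpha\Beta^*)$. Your argument is more direct and avoids the detour through $\alpha^{-1}$ (and with it the tacit assumption $\alpha\neq 0$); the paper's version has the minor advantage of making the multiplicativity of $\epsilon_{\Beta^*}$ explicit. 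Either way the content is the same elementary bookkeeping.
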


\begin{proof}
Let $B:= \epsilon _{\Beta }(\alpha \Beta ) $ and $\Tau _{\Beta } =: \Tau $.  If we denote the entry of a matrix $A$ at position $(i,j)$ by $A_{ij}$, then
$$\beta _j = \sum _{i=1}^n \Tau _{ij} \beta _i^*, \ 
\beta _j^* = \sum _{i=1}^n (\Tau ^{-1})_{ij} \beta _i \mbox{, and } 
\alpha \beta _i = \sum _{j=1}^n B_{ji} \beta _j .$$
We compute
$$ \begin{array}{rcl} \alpha \beta _{r }^* & = & 
\alpha \sum _{i=1}^n (\Tau ^{-1})_{i{r }}  \beta _i \\[1ex]  & = &
\sum _{i=1}^n (\Tau ^{-1})_{i{r }} (\alpha \beta _i ) \\[1ex]  &=&  
\sum _{i=1}^n (\Tau ^{-1})_{i{r }}  \sum _{j=1}^n B_{ji} \beta _j \\[1ex] & = &
\sum _{i=1}^n (\Tau ^{-1})_{i{r }}  \sum _{j=1}^n B_{ji} \sum _{m=1}^n \Tau _{mj} \beta _m^*  \\[1ex]
& =& \sum _{m=1}^n (\Tau  B \Tau ^{-1} ) _{m,r } \beta _m^* ,
\end{array} $$
which proves the second equality.
To see the first equality let $C:=\epsilon _{\Beta ^*} (\alpha^{-1}  \Beta ^*) $,
so $ \alpha ^{-1} \beta _r^* = \sum _{s=1}^n C_{sr} \beta _s^*$.
Note that $BC^{\tr} = I_n$ since
$$ \begin{array}{rcl} \delta_{ir} & = & \Trace_{K/k} (\alpha \beta _i \alpha ^{-1} \beta _r^* ) \\[1ex] & = &
\sum _{j=1}^n B_{ji} \sum _{s=1}^n C_{sr} \Trace_{K/k} (\beta _j \beta_s^*)  \\[1ex]
& = & \sum _{j=1}^n B_{ji} C_{jr} =  (BC^{\tr} ) _{ir} . \end{array} $$
So $B^{\tr} = C^{-1} = (\epsilon _{\Beta ^*} (\alpha^{-1}  \Beta ^*)  )^{-1} =
\epsilon _{\Beta ^*} (\alpha  \Beta ^*) $. The last equality follows from the fact that the matrix  $\epsilon _{\Beta ^*} (\alpha  \Beta ^*) $
describes the $k$-linear map induced by the multiplication of $\alpha$ on $K$ with respect to the basis $\Beta^*$.
\end{proof}

\subsection{Automorphisms of Gabidulin codes.} 

In this section we  determine the 
automorphism group  of 
Gabidulin codes of full length $n=[K:k]$. 
To obtain a nice description in terms of matrices we use
a normal basis
$\Gamma := (\gamma,\gamma ^{[1]},\ldots, \gamma ^{[n-1]} ) $
 of $K$ over $k$.
Define $\Tau :=\Tau _{\Gamma } = \epsilon_{\Gamma ^*}(\Gamma ) \in k^{n\times n} $ to 
be the Gram matrix of the trace bilinear form with respect to $\Gamma $ and 
let
$$A:= \left( \begin{array}{cccc} 
0 & \ldots &  0 & 1 \\
1 & 0 & \ldots & 0  \\
0 & \ddots & \ddots & \vdots \\
0 & \ldots & 1 & 0 \end{array} \right) = \epsilon _{\Gamma }(\Gamma ^{[1]}).$$

By direct matrix computations we obtain the following elementary 
properties of these matrices.

\begin{rem} \label{vert} 
\begin{itemize}
\item[(i)] $A\Tau  = \Tau A $.
\item[(ii)] For $1\leq j\leq n-1$ we have $\Gamma ^{[j]} = \Gamma A^j $. 
\end{itemize}
\end{rem} 


\begin{definition} {\rm 
For $1\leq \ell \leq n$ the Gabidulin code
${\mathcal G}_{\ell ,\Gamma } \leq k^{n \times n }$ 
is the $k$-linear code 
$${\mathcal G}_{\ell ,\Gamma } = \langle \epsilon_{\Gamma } (\gamma _i \Gamma ^{[j]} ) \mid  1\leq i \leq n, 0\leq j\leq \ell -1 \rangle .$$ 
Let ${\mathcal K} := {\mathcal G}_{1,\Gamma }$. }
\end{definition}

\begin{lemma} \label{notation} 
\begin{itemize}
\item[\rm (i)]
${\mathcal K}$ is an $n$-dimensional subalgebra of $k^{n\times n}$ isomorphic to $K= \F_{q^n}$.
\item[\rm (ii)] For any $B \in {\mathcal K} $ we have  $AB A^{-1} = B^q $.
In particular $A{\mathcal K} = {\mathcal K} A$ as a set.
\item[\rm (iii)] 
The normalizer in $\GL_n(k)$ of ${\mathcal K}^{\times }$ is 
the semidirect product of ${\mathcal K}^{\times }$ and the cyclic group $\langle A \rangle $
of order $n$.
\item[\rm (iv)] $\trace (B A^{\ell}) = 0 $ for all $B\in {\mathcal K}$ and  all
$1\leq  \ell \leq n-1$.
\item[\rm (v)] 
The full matrix ring 
$$k^{n\times n}  = {\mathcal K} \oplus {\mathcal K} A 
\oplus \ldots \oplus {\mathcal K} A^{n-1}$$ 
is a cyclic algebra.
So for all $X\in k^{n\times n}$ there are unique 
$x_i \in {\mathcal K}$ such that $X = \sum _{i=0}^{n-1} x_i A^i$. 
\item[\rm (vi)]
For $\ell \geq 1$ 
$${\mathcal G}_{\ell , \Gamma } = {\mathcal K} \oplus {\mathcal K} A \oplus \ldots \oplus {\mathcal K} A^{\ell -1 } .$$ 
\end{itemize}
\end{lemma}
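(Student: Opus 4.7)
The whole lemma unpacks the observation that ${\mathcal K}$ is the image of the regular representation of the field $K$ on itself with respect to the basis $\Gamma$, and that $A$ is the matrix of the Frobenius $\phi\colon x\mapsto x^q$ in the same basis. Granting this, part (i) is immediate: the map $\alpha\mapsto\epsilon_\Gamma(\alpha\Gamma)$ is an injective $k$-algebra homomorphism $K\hookrightarrow k^{n\times n}$, its image is $n$-dimensional, and the generators listed for ${\mathcal K}$ are the images of the basis vectors $\gamma_i$ of $K$. For (ii) I use that $\Gamma$ is a normal basis, so $\phi$ cyclically permutes $\Gamma$ and $A=\epsilon_\Gamma(\Gamma^{[1]})$ is exactly the matrix of $\phi$; hence $ABA^{-1}$ represents $\phi\circ m_\alpha\circ\phi^{-1}=m_{\alpha^q}$, which inside the field ${\mathcal K}$ is $B^q$. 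In particular $A{\mathcal K}={\mathcal K} A$, since $B^q$ runs over ${\mathcal K}$ as $B$ does.

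Part (iii) is then Skolem--Noether style bookkeeping. Since ${\mathcal K}$ is a maximal subfield of $k^{n\times n}$ of dimension $n$, its centraliser in $\GL_n(k)$ equals ${\mathcal K}^\times$, so the normaliser of ${\mathcal K}^\times$ modulo ${\mathcal K}^\times$ embeds into $\Aut({\mathcal K})\cong\Gal(K/k)\cong\Z/n\Z$. By (ii) the element $A$ induces Frobenius, so $\langle A\rangle$ already surjects onto this Galois group, and $\langle A\rangle\cap{\mathcal K}^\times=\{I\}$ because a nontrivial power $A^i$ with $0<i<n$ acts as a nontrivial Galois element on ${\mathcal K}$ whereas elements of ${\mathcal K}^\times$ act trivially. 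The claimed semidirect product description follows.

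Part (iv) is the main technical step. I translate $BA^\ell$ back into the $k$-endomorphism $x\mapsto\alpha x^{q^\ell}$ of $K$, where $B=\epsilon_\Gamma(\alpha\Gamma)$, and compute its matrix trace by the dual-basis identity $(BA^\ell)_{jj}=\Trace_{K/k}(\alpha\gamma_j^{q^\ell}\gamma_j^*)$. Since $\Gamma$ is normal, so is its dual basis, say $\gamma_j^*=\delta^{[j-1]}$ for a single $\delta\in K$; a straightforward re-indexing of the double sum using $\Gal(K/k)$-invariance of $\Trace_{K/k}$ collapses it to the clean factorisation $\trace(BA^\ell)=\Trace_{K/k}(\alpha)\cdot\Trace_{K/k}(\gamma^{[\ell]}\delta)$, and the second factor vanishes for $1\le\ell\le n-1$ by the defining property of the dual basis. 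The main delicacy here is obtaining this factorisation cleanly; once it is in hand (iv) is immediate.

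Part (v) then follows by a dimension count plus non-degeneracy. The $n$ summands ${\mathcal K} A^i$ each have $k$-dimension $n$ and together fill $k^{n\times n}$ if their sum is direct, so it suffices to prove linear independence. If $\sum_i b_iA^i=0$ with $b_i\in{\mathcal K}$, then multiplying on the right by $A^{-i}c$ for arbitrary $c\in{\mathcal K}$ and using $A^jc=c^{q^j}A^j$ (a consequence of (ii)) expresses the sum as a combination of elements of the various ${\mathcal K} A^{j-i}$; by (iv) only the $j=i$ term contributes to the matrix trace, leaving $\Trace_{K/k}(\beta_i\gamma)=0$ for all $\gamma\in K$, which forces $b_i=0$ by non-degeneracy of the trace form on $K$. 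Finally (vi) is a one-line corollary: by Remark \ref{vert}(ii), $\epsilon_\Gamma(\gamma_i\Gamma^{[j]})=\epsilon_\Gamma(\gamma_i\Gamma)\cdot A^j\in{\mathcal K} A^j$, and as $i$ runs through $1,\dots,n$ these generators exhaust a $k$-basis of ${\mathcal K} A^j$, so ${\mathcal G}_{\ell,\Gamma}=\bigoplus_{j=0}^{\ell-1}{\mathcal K} A^j$ by (v).
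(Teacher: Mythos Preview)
Your proof is correct, and parts (i), (ii), (vi) match the paper's approach (the paper does (ii) by a direct column computation rather than the representation-theoretic one-liner, but the content is identical; for (iii) the paper simply cites Huppert, so your Skolem--Noether sketch is strictly more informative).

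The genuinely different steps are (iv) and (v). For (iv) the paper passes to $K^{n\times n}$, simultaneously diagonalises ${\mathcal K}$, and observes that the conjugate of $BA^\ell$ becomes a monomial matrix supported on a fixed-point-free permutation of the eigenspaces, hence has trace~$0$. Your route stays over $k$: using that the dual of a normal basis is again normal, you obtain the factorisation $\trace(BA^\ell)=\Trace_{K/k}(\alpha)\cdot\Trace_{K/k}(\gamma^{[\ell]}\delta)$, and the second factor vanishes by the dual-basis relation. This is a nice alternative---it avoids extending scalars and gives a sharper statement (one even sees the $\ell=0$ case explicitly). For (v) the paper appeals to linear independence of the Frobenius iterates $\Gamma^{[i]}$ over $K$ (Artin's theorem, via Lidl--Niederreiter), whereas you bootstrap from (iv): multiplying a dependence relation by $A^{-i}c$ and taking the trace isolates the $i$-th coefficient via non-degeneracy of the trace form. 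Your argument is self-contained and makes (iv) do double duty; the paper's is shorter but imports an external lemma. Both approaches are standard in the theory of cyclic algebras.
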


\begin{proof}
(i) The map $K\to {\mathcal K}$, $\alpha \mapsto \epsilon _{\Gamma }( \alpha \Gamma ) $ 
is an isomorphism of $k$-algebras. 
\\
(ii) 
We use the isomorphism above to write $B = \epsilon _{\Gamma }(\beta \Gamma )$ for 
some $\beta \in K$ and recall 
that $A_{ij} = \delta _{i,(j+1)} $.
We show that $AB =B ^q A $ for all $B \in {\mathcal K}$.
By definition we have that 
$$\beta \gamma ^{[j]}  = \sum _{i=0}^{n-1} B_{ij}\gamma ^{[i]} ,\ (AB)_{(i+1)j} = B_{ij}, \mbox{ and } (B^q A) _{ij} = (B^q)_{i(j+1)} .$$
Therefore we compute  for all $j=0,\ldots , n-1$
$$\begin{array}{l} 
\sum _{i=0}^{n-1} (B^qA)_{ij} \gamma ^{[i]}  = 
\sum _{i=0}^{n-1} (B^q)_{i(j+1)} \gamma ^{[i]}  = 
\beta ^q \gamma ^{[j+1]} = (\beta \gamma ^{[j]} )^{[1]}   = 
(\sum _{i=0}^{n-1} B_{ij} \gamma ^{[i]} ) ^{[1]}  = \\
\sum _{i=0}^{n-1} B_{ij} \gamma ^{[i+1]}   =    
\sum _{i=0}^{n-1} (AB)_{(i+1)j} \gamma ^{[i+1]}   =     
\sum _{i=0}^{n-1} (AB)_{ij} \gamma ^{[i]}   .\end{array} $$
So the $j$-th column of $B^qA$ and $AB$ coincide. 
\\
(iii) 
This is well-known and widely used in geometry and group theory,
see for instance \cite{Huppert}, Kap. II, Satz 7.3.
\\
(iv)  We consider the matrices in $K^{n\times n}$. 
Take any primitive element $\alpha \in K$. 
Then $C:=\epsilon _{\Gamma }(\alpha \Gamma )\in \GL_n(K)$ has $n$ 
distinct eigenvalues $\alpha , \alpha ^{[1]} , \ldots , 
\alpha ^{[n-1]}$, the roots of the minimal polynomial of $\alpha $ 
over $k$. 
In particular there is a matrix 
$X \in \GL_n(K)$
such that $X^{-1}CX = \diag (\alpha , \alpha ^{[1]} , \ldots , 
\alpha ^{[n-1]})$. 
As $A C A^{-1} = C^q$ (by (ii)) also 
$(X^{-1}AX) (X^{-1} C X) (X^{-1} A X)^{-1} = (X^{-1}CX)^q$, 
so $X^{-1} A X $ cyclically permutes the eigenspaces of 
$X^{-1} C X $. More precisely there are $a_i\in K$ such that 
$$(X^{-1} A X )_{ij} =   \left\{ \begin{array}{ll} a_i & j = i+1 \\ 0 & \mbox{ otherwise } \end{array}\right. $$
where as usual the indices are taken modulo $n$. 
Because $k[C] = {\mathcal K}$, 
any $B \in {\mathcal K}$ is a polynomial in $C$ and hence 
$X^{-1} B X $ is a diagonal matrix. 
So for any $1\leq i \leq n-1$ the 
matrix $X^{-1} BA^i X $ is  monomial with no non zero entries on the diagonal,
because it induces the fixed point free permutation $(1,2,\ldots,n)^i$ 
on the eigenspaces of $X^{-1} C X $. 
In particular its trace is 0. As the trace is invariant under 
conjugation we also get 
$\trace (BA^i) = \trace ( X^{-1} BA^i X ) = 0 $. \\[1ex]
(v)  Suppose that $\sum_{i=0}^{n-1}B_iA^i =0$ where $B_i \in {\mathcal K}$.
Note that 
$B_iA^i = \epsilon_\Gamma(\beta_i \Gamma^{[i]})$.
Thus we obtain
$  \epsilon_\Gamma(\sum_{i=0}^{n-1} \beta_i \Gamma^{[i]}) = 0,$ hence  $ \sum_{i=0}^{n-1} \beta_i \Gamma^{[i]} =(0,\ldots , 0)$ since $\epsilon_\Gamma$ is injective.
By \cite[Chapter 3, Lemma 3.50]{LN}, the $\Gamma^{[i]}$ are linearly independent over $K$, hence $\beta_i =0$ for all $i$. This proves that the
right hand  side of the equation in (v) is a direct sum. The equality follows by comparing dimensions.\\
(vi) This follows immediately from (v) using the definition of ${\mathcal G}_{\ell ,\Gamma }$.
\end{proof}

We are now ready to determine the automorphism group of ${\mathcal G}_{\ell ,\Gamma }$ for all $\ell $. 
Clearly ${\mathcal G}_{0, \Gamma } := \{ 0 \}$ and
 ${\mathcal G}_{n,\Gamma } = k^{n\times n}$ are fixed by all 
linear equivalences (introduced in Section \ref{AUTO}).
Also for the other Gabidulin codes 
there are certain obvious matrices $(X,Y)\in \GL_n(k) \times \GL_n(k)$, 
so that 
$X {\mathcal G}_{\ell ,\Gamma } Y = {\mathcal G}_{\ell ,\Gamma }$ (see for instance \cite{Morrison}):
\\
For notational convenience we put
 ${\mathcal K}^{\times } := {\mathcal K} \setminus \{0 \}$.
Then ${\mathcal K}^{\times } \leq \GL_n(k)$ is isomorphic to the
multiplicative group $K^{\times }$ of $K$ and hence cyclic of order $q^n-1$.
Let $S$ be any generator of ${\mathcal K}^{\times } = \langle S \rangle $ as a group. In group theory $S$ is often called a Singer cycle.
Clearly ${\mathcal K}^{\times }$  contains the subgroup of nonzero scalar matrices
$$C_{q-1} \cong k^{\times } \cong k^{\times } I_n = 
\langle S^{(q^n-1)/(q-1)} \rangle = \langle SS^qS^{(q^2)} \cdots S^{(q^{n-1})} \rangle \leq {\mathcal K} ^{\times } .$$
Furthermore if $X\in  {\mathcal K}^{\times } 
$, then
$X {\mathcal G}_{\ell ,\Gamma }  = {\mathcal G}_{\ell ,\Gamma } $
and
${\mathcal G}_{\ell ,\Gamma }  X = {\mathcal G}_{\ell ,\Gamma } $.
By Lemma \ref{notation} (ii) conjugation by $A$ preserves the set ${\mathcal K}$,
so
$A^j {\mathcal G}_{\ell ,\Gamma } A^{-j} = {\mathcal G}_{\ell ,\Gamma } $ for $j=0,\ldots , n-1$.

The next theorem shows that these obvious automorphisms 
already generate the full automorphism group of the Gabidulin codes.

\begin{theorem}\label{properaut}
For $0< \ell < n$ the group of proper automorphisms of ${\mathcal G}_{\ell ,\Gamma }$ is 
$$\Aut ^{(p)} ({\mathcal G}_{\ell ,\Gamma } ) = 
\{ \kappa _{X,Y}  \mid (X,Y) \in (A^j {\mathcal K}^{\times } \times A^{-j} {\mathcal K}^{\times } ) , 0\leq j \leq n-1 \}  $$
which is isomorphic to the semidirect product of 
$C_n\cong \Gal (K/k) $ with the normal subgroup 
${\mathcal K}^{\times } {\sf Y} {\mathcal K}^{\times } $ 
 the central product of $K^{\times }$ with itself amalgamated 
over $k^{\times }$.
\end{theorem}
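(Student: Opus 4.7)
The inclusion $\supseteq$ has already been argued in the paragraph just before the theorem. For the reverse inclusion I begin by pinning down the one-sided stabiliser $L(\mathcal G_{\ell,\Gamma}) := \{M \in \GL_n(k) \mid M \mathcal G_{\ell,\Gamma} = \mathcal G_{\ell,\Gamma}\}$ and its right-hand analogue $R(\mathcal G_{\ell,\Gamma})$. For $M \in L(\mathcal G_{\ell,\Gamma})$, Lemma~\ref{notation}(vi) forces $M \in \mathcal G_{\ell,\Gamma}$, so $M = \sum_{i=0}^{\ell-1} M_i A^i$ with $M_i \in \mathcal K$. Testing $M$ against each $A^j$, $0 \le j < \ell$, and projecting onto the direct summands of $\mathcal G_{\ell,\Gamma}$, one sees that $M_i \neq 0$ forces $\{i+j \bmod n : 0 \le j < \ell\} \subseteq \{0,\ldots,\ell-1\}$; since both sides have $\ell$ elements and $\ell < n$, this is possible only for $i=0$. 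Hence $L(\mathcal G_{\ell,\Gamma}) = \mathcal K^\times$, and the symmetric argument on the right gives $R(\mathcal G_{\ell,\Gamma}) = \mathcal K^\times$.

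Now let $\kappa_{X,Y}$ be any proper automorphism of $\mathcal G_{\ell,\Gamma}$. For $K \in \mathcal K^\times$, left multiplication $L_K : N \mapsto KN$ is a bijection of $\mathcal G_{\ell,\Gamma}$, and conjugation by $\kappa_{X,Y}$ turns it into $L_{X^{-1}KX}$ on $\mathcal G_{\ell,\Gamma}$. Hence $X^{-1} K X \in L(\mathcal G_{\ell,\Gamma}) = \mathcal K^\times$; passing to $k$-linear spans gives $X^{-1}\mathcal K X = \mathcal K$, so by Lemma~\ref{notation}(iii), $X \in \mathcal K^\times \langle A \rangle$. The mirror argument with right multiplications puts $Y$ in $\mathcal K^\times \langle A \rangle$. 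Writing $X = A^{j_X} K_X$ and $Y = K_Y A^{j_Y}$ with $K_X, K_Y \in \mathcal K^\times$, the twisted commutation $AK = K^q A$ of Lemma~\ref{notation}(ii) collapses $\kappa_{X,Y}(A^i) = X A^i Y$ to an element of $A^{j_X + j_Y + i}\mathcal K^\times$. Since each such image must lie in $\mathcal G_{\ell,\Gamma}$ for every $i = 0,\ldots, \ell - 1$, the combinatorial argument from the first paragraph applies to the set $\{j + i \bmod n\}$ with $j := j_X + j_Y$ and forces $j \equiv 0 \pmod n$, i.e.\ $j_Y \equiv -j_X \pmod n$. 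This establishes the claimed set equality.

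For the group structure, the assignment $(X,Y) \mapsto \kappa_{X,Y}$ factors through the identification $(X, Y) \sim (\lambda X, \lambda^{-1} Y)$ for $\lambda \in k^\times$; on the $j = 0$ block $\mathcal K^\times \times \mathcal K^\times$ this amalgamates the diagonal subgroup of scalars, yielding the central product $\mathcal K^\times\,{\sf Y}\,\mathcal K^\times$. The $j$-parameter contributes a cyclic group $C_n$ generated by $\kappa_{A, A^{-1}}$, which by Lemma~\ref{notation}(ii) acts diagonally as Frobenius on each factor of the central product, giving the claimed semidirect product structure.

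The main obstacle is the very first step — the combinatorial argument that both one-sided stabilisers collapse to $\mathcal K^\times$. Once that is in place, the normaliser description in Lemma~\ref{notation}(iii) and the twisted commutation $AK=K^q A$ force everything, and the final matching of the two exponents $j_X, j_Y$ simply recycles the combinatorial trick from the first step.
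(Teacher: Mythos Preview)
Your proof is correct and follows essentially the same route as the paper's: both arguments identify the one-sided stabiliser as $\mathcal K^{\times}$ (your arc-inclusion argument in $\Z/n\Z$ is just a repackaging of the paper's ``multiply by $A$ and peel off the top coefficient'' iteration), then invoke Lemma~\ref{notation}(iii) to place $X$ in $\langle A\rangle\mathcal K^{\times}$. The only cosmetic difference is in handling $Y$: the paper, having $X\in A^{j}\mathcal K^{\times}$, observes directly that $X\mathcal G_{\ell,\Gamma}X^{-1}=\mathcal G_{\ell,\Gamma}$ and hence $XY$ lies in the right stabiliser $\mathcal K^{\times}$, whereas you run the mirror normaliser argument for $Y$ separately and then match exponents combinatorially; both are equally short and neither buys anything the other lacks.
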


\begin{proof} The inclusion $\supseteq$ is clear. To see the converse we suppose that $X {\mathcal G}_{\ell ,\Gamma } Y = {\mathcal G}_{\ell ,\Gamma } $ for
$X,Y \in \GL_n(k)$.
\\[1ex]
Claim 1: 
If ${\mathcal Z}:= \{ Z\in \GL_n(k) \mid Z{\mathcal G}_{\ell ,\Gamma } = {\mathcal G}_{\ell ,\Gamma } \}$ then ${\mathcal Z}  = {\mathcal K}^{\times } $: 
\\
According to Lemma \ref{notation} (v) we may write $Z:= \sum _{i=0}^{n-1} z_i A^i \in {\mathcal Z} $ where
$z_i \in {\mathcal K}$ for all $i$.
As $I_n  \in {\mathcal K}^{\times } \subseteq {\mathcal G}_{\ell ,\Gamma } $ also 
$Z = ZI_n\in {\mathcal G}_{\ell ,\Gamma } $, so  $z_i=0$ for $i=\ell,\ldots, n-1$. 
If $\ell \geq 1$, then  also $A\in {\mathcal G}_{\ell ,\Gamma } $. Thus 
$ZA = \sum _{i=0}^{\ell -1} z_i A^{i+1} \in {\mathcal G}_{\ell ,\Gamma } $,
which implies that $z_{\ell -1} = 0 $.
 Repeating this argument several times  we obtain
$z_1 =\ldots = z_{n-1} = 0$ and $Z=z_0\in {\mathcal K} $.
\\[1ex]
Claim 2:  $X {\mathcal Z} X^{-1} = {\mathcal Z} (={\mathcal K}^{\times })$: \\
$X {\mathcal G}_{\ell ,\Gamma } Y = {\mathcal G}_{\ell ,\Gamma }$
is obviously invariant under left multiplication with $X {\mathcal Z} X^{-1}$. Thus  
 Claim 1 implies $X {\mathcal Z} X^{-1} = {\mathcal Z} $.
\\[1ex]
Final step: 
By Claim 2 we know that
 $X \in \GL_n(k)$ lies in the normalizer of  ${\mathcal K}^{\times }$. 
Note that $A$ induces by conjugation on ${\mathcal K}^{\times } $ the Galois automorphism $x \mapsto x^q$ (cf. Lemma \ref{notation} (ii)).
By Lemma \ref{notation} (iii) the normalizer of ${\mathcal K}^{\times }$ is 
$ N_{\GL_n(k)}({\mathcal K}^{\times }) =  \langle A \rangle {\mathcal K}^{\times }$.
Therefore there is some $0\leq j \leq n-1$ such that 
$X\in A^j {\mathcal K}^{\times }$. 
In particular $X {\mathcal G}_{\ell ,\Gamma } X^{-1} = {\mathcal G}_{\ell ,\Gamma } $ and hence 
${\mathcal G}_{\ell ,\Gamma } XY = {\mathcal G}_{\ell ,\Gamma } $. Similar to the proof of Claim 1 we conclude that
 $XY \in {\mathcal K}^{\times } $, hence $Y \in {\mathcal K}^{\times } A^{-j }{\mathcal K}^{\times } = A^{-j} {\mathcal K}^{\times }$.
\end{proof}

\begin{cor}\label{fullaut}
For $0< \ell < n$ the full automorphism group of ${\mathcal G}_{\ell ,\Gamma }$ is 
$$\Aut ({\mathcal G}_{\ell , \Gamma } ) = 
\langle \Aut ^{(p)} ({\mathcal G}_{\ell ,\Gamma } ) , \tau _{\Tau ^{-1},\Tau A^{\ell-1}} \rangle $$
and contains the group of proper automorphisms from Theorem \ref{properaut}
of index 2.
In particular 
$$|\Aut ({\mathcal G}_{\ell , \Gamma } ) | = 2n(q^n-1) \frac{q^n-1}{q-1} .$$
\end{cor}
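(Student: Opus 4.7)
My plan rests on the fact that Theorem \ref{properaut} has already identified $\Aut^{(p)}(\mathcal{G}_{\ell,\Gamma})$. Since the proper linear rank-distance preserving automorphisms of $k^{n\times n}$ form an index-$2$ subgroup of the full such automorphism group (Section \ref{AUTO}), $\Aut^{(p)}(\mathcal{G}_{\ell,\Gamma})$ has index $1$ or $2$ in $\Aut(\mathcal{G}_{\ell,\Gamma})$. So the corollary will reduce to showing that the explicit improper map $\tau_{\Tau^{-1},\Tau A^{\ell-1}}$ really is an automorphism of $\mathcal{G}_{\ell,\Gamma}$; once this is established, the index is forced to be $2$, and the order formula drops out by doubling the order that Theorem \ref{properaut} already yields.

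The heart of the argument is therefore the verification that $\tau_{\Tau^{-1},\Tau A^{\ell-1}}$ stabilises $\mathcal{G}_{\ell,\Gamma}$. By Lemma \ref{notation}(vi) and linearity it is enough to check this on the generators $BA^j$ with $B\in\mathcal{K}$ and $0\le j\le\ell-1$. I would compute $\Tau^{-1}(BA^j)^{\tr}\Tau A^{\ell-1}$ using three facts already in hand: $A^{\tr}=A^{-1}$, since $A$ is a permutation matrix; $A\Tau=\Tau A$ from Remark \ref{vert}(i); and $B^{\tr}=\Tau B\Tau^{-1}$ for $B\in\mathcal{K}$, which drops out of Lemma \ref{basechange}. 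The $\Tau$'s then cancel, leaving $A^{-j}BA^{\ell-1}$. Rewriting $A^{-j}B$ as $B^{[-j]}A^{-j}$ by iterating Lemma \ref{notation}(ii), the image becomes $B^{[-j]}A^{\ell-1-j}\in\mathcal{K} A^{\ell-1-j}\subseteq \mathcal{G}_{\ell,\Gamma}$, since $0\le\ell-1-j\le\ell-1$. As $\tau_{\Tau^{-1},\Tau A^{\ell-1}}$ is a bijection on $k^{n\times n}$, this inclusion into $\mathcal{G}_{\ell,\Gamma}$ upgrades to an automorphism of $\mathcal{G}_{\ell,\Gamma}$ by dimension.

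For the order, Theorem \ref{properaut} gives $|\Aut^{(p)}(\mathcal{G}_{\ell,\Gamma})|=n(q^n-1)^2/(q-1)$, because the central product ${\mathcal K}^{\times}\,{\sf Y}\,{\mathcal K}^{\times}$ amalgamated over the scalar subgroup $k^{\times}$ of order $q-1$ has order $(q^n-1)^2/(q-1)$, and this is extended by $\Gal(K/k)$ of order $n$. Doubling for the improper coset then gives the stated $2n(q^n-1)(q^n-1)/(q-1)$. The main obstacle I anticipate is the clean execution of the matrix calculation in the previous paragraph; everything has been designed so that the $\Tau$-conjugation is exactly what is needed to convert the transpose of an element of $\mathcal{K}$ back into $\mathcal{K}$, and the extra factor $A^{\ell-1}$ on the right compensates for the $A^{-j}$ coming from transposing the $A^j$, so that the Frobenius twist on $B$ and the surviving power of $A$ remain matched inside a single layer of the cyclic algebra decomposition of Lemma \ref{notation}(v).
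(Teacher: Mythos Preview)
Your proposal is correct and follows essentially the same route as the paper: reduce to index at most $2$, then verify directly that $\tau_{\Tau^{-1},\Tau A^{\ell-1}}$ sends each layer $\mathcal{K}A^j$ to $\mathcal{K}A^{\ell-1-j}$ using $A^{\tr}=A^{-1}$, $A\Tau=\Tau A$, Lemma~\ref{basechange}, and Lemma~\ref{notation}(ii). The only cosmetic difference is that the paper moves the $A^{-j}$ past $C^{\tr}$ first (introducing an auxiliary $C'\in\mathcal{K}$) before invoking Lemma~\ref{basechange}, whereas you apply $B^{\tr}=\Tau B\Tau^{-1}$ first and then commute $A^{-j}$ through $\Tau$; the computations are the same up to reordering, and your explicit order count is a welcome addition. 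One notational caveat: the paper defines $\alpha^{[i]}$ only for $0\le i\le n-1$, so your $B^{[-j]}$ should be read as $B^{q^{n-j}}$.
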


\begin{proof}
For any subgroup $U\leq G$ of some finite group $G$ and a normal subgroup $N\unlhd G$,
we have $|U/(N\cap U) | \leq |G/N |$. 
So in particular the index of $\Aut ^{(p)}  ({\mathcal G}_{\ell ,\Gamma } ) $ in 
the full automorphism group is either $1$ or $2$ and it suffices to 
show that $\tau _{\Tau ^{-1},\Tau A^{\ell -1 }}  ({\mathcal G}_{\ell ,\Gamma } ) = 
{\mathcal G}_{\ell ,\Gamma } $.
To this aim let $C\in {\mathcal K}$ and $0\leq j\leq \ell -1 $.
Then 
$$ \begin{array}{rcl} \tau _{\Tau ^{-1},\Tau A^{\ell -1 }} (CA^j) & = &
\Tau ^{-1} (CA^j)^{\tr} \Tau  A^{\ell -1}  = \Tau ^{-1} (C')^{\tr} A^{-j} \Tau  A^{\ell -1} \\[1ex] & = &
\Tau ^{-1} (C')^{\tr} \Tau  A^{\ell -1 -j } \end{array}$$
for some $C'\in {\mathcal K}$, because conjugation by $A$ preserves ${\mathcal K}$ as 
a set. The last equality follows from Remark \ref{vert} (i). 
By Lemma \ref{basechange}, we have $\Tau ^{-1}  (C')^{\tr} \Tau = C' \in {\mathcal K}$. 
So $\tau _{\Tau ^{-1},\Tau A^{\ell -1} } $ maps ${\mathcal K} A^j $
onto ${\mathcal K} A^{\ell -1 -j }$ and hence preserves the code ${\mathcal G}_{\ell ,\Gamma } $.
\end{proof}

\subsection{Self-dual Gabidulin codes.} 

According to Theorem \ref{char2} and the fact that Gabidulin codes are $\MRD$ codes,
there are no self-dual Gabidulin codes in even characteristic.
So in this section we assume that $k=\F_{q}$, $K := \F_{q^n}$ and $q$ is odd. 
We keep the notation from above. In 
particular $\Gamma = (\gamma,\gamma^{[1]},\ldots , \gamma ^{[n-1]}$ is 
a  normal basis of $K/k$, 
$\Tau :=\Tau _{\Gamma }$, $\langle S \rangle = {\mathcal K}^{\times }$,
and $A:=\epsilon_{\Gamma }(\Gamma ^{[1]}) $.
%
If the Gabidulin code ${\mathcal G}_{\ell ,\Gamma } $ 
is equivalent to a self-dual code then  $\ell =n/2$ and
$n$ needs to be even. 
The following facts are elementary but crucial for the proofs of Proposition
\ref{dualmat} and Theorem \ref{gabisd} below.

\begin{lemma}\label{basic}
Assume that $n$ is even. Then
\begin{itemize}
\item[\rm (i)] $\det(A) = -1 $ and $A^{\tr} = A^{-1}$.
\item[\rm (ii)] $A S A^{-1} = S^q$.
\item[\rm  (iii)] $A \Tau  = \Tau A $.
In particular $(\Tau A^{\ell})^{\tr} = A^{-\ell}\Tau  $.
\item[\rm  (iv)] $\det(S) $ is a primitive element of $\F_q$.
\item[\rm  (v)] $\det(\Tau ) \not\in (\F_q^{\times })^2 $.
\item[\rm  (vi)]
$\Tau S^j$  is symmetric for all $j=0,\ldots , q^n-1$.
\item[\rm  (vii)] $S^j\Tau ^{-1}$ is symmetric for all $j=0,\ldots , q^n-1$.
\item[\rm  (viii)]
$(\Tau A^jS^i)$ is symmetric if and only if
$\left\{ \begin{array}{l} j=n/2 \mbox{ and }  (q^{n/2} +1) \mid i \\ \mbox{ or } \\
j=0 \mbox{ and } i \in \{ 1,\ldots , q^n-1 \} , \end{array} \right. $ \\
if and only if $S^i A^j \Tau ^{-1} $ is symmetric.
\end{itemize}
\end{lemma}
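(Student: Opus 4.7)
The plan is to handle (i)--(iii) as direct matrix-theoretic computations, deduce (iv) and (v) from properties of the norm map together with the cited result of \cite{LS}, apply Lemma \ref{basechange} for (vi) and (vii), and combine everything with the semidirect product structure of Lemma \ref{notation}(iii) to obtain the delicate last item (viii). For (i), $A$ is the permutation matrix of the $n$-cycle $(1,2,\ldots,n)$, so $A^{\tr} = A^{-1}$ and $\det(A) = (-1)^{n-1} = -1$ since $n$ is even. Part (ii) is immediate from Lemma \ref{notation}(ii) applied to $S \in {\mathcal K}^{\times}$. For (iii), $A\Tau = \Tau A$ is Remark \ref{vert}(i), and $(\Tau A^{\ell})^{\tr} = A^{-\ell}\Tau$ then follows from $\Tau^{\tr} = \Tau$ together with $A^{\tr} = A^{-1}$.

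For (iv), I would use the algebra isomorphism ${\mathcal K} \cong K$ from Lemma \ref{notation}(i) to write $S = \epsilon_{\Gamma}(\sigma\Gamma)$ for some $\sigma \in K^{\times}$ of order $q^n-1$. Then $\det(S)$ is the determinant of the multiplication-by-$\sigma$ map, i.e.\ $N_{K/k}(\sigma)$. Since the norm is surjective onto $\F_q^{\times}$ with kernel of order $(q^n-1)/(q-1)$, the element $N_{K/k}(\sigma)$ has order exactly $q-1$ and thus generates $\F_q^{\times}$. For (v), Lemma \ref{qf} says that $\det(\Tau) \in (\F_q^{\times})^2$ would yield $M \in \GL_n(k)$ with $\Tau = M^{\tr}M$; regarding $M^{-1}$ as a change-of-basis matrix would then produce a self-dual basis $\Beta$ (i.e.\ $\Tau_{\Beta} = I_n$), contradicting the result of \cite{LS} in the case $q$ odd and $n$ even.

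Items (vi) and (vii) follow at once from Lemma \ref{basechange}: writing $S^j = \epsilon_{\Gamma}(\sigma^j\Gamma)$ gives $(S^j)^{\tr} = \Tau S^j \Tau^{-1}$, hence $\Tau S^j = (S^j)^{\tr}\Tau = (\Tau S^j)^{\tr}$, and similarly $(S^j \Tau^{-1})^{\tr} = \Tau^{-1}(S^j)^{\tr} = \Tau^{-1}\Tau S^j\Tau^{-1} = S^j\Tau^{-1}$.

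The main obstacle is (viii). Combining $A^{\tr} = A^{-1}$, $\Tau^{\tr} = \Tau$, the identity $(S^i)^{\tr} = \Tau S^i \Tau^{-1}$ and the commutativity $A\Tau = \Tau A$, a direct calculation yields
\[ (\Tau A^j S^i)^{\tr} = (S^i)^{\tr} A^{-j}\Tau = \Tau S^i \Tau^{-1} A^{-j}\Tau = \Tau S^i A^{-j}. \]
Thus $\Tau A^j S^i$ is symmetric iff $A^j S^i = S^i A^{-j}$, which I would rewrite (using the iterated form $A^j S^i A^{-j} = S^{iq^j}$ of Lemma \ref{notation}(ii)) as $S^{i(q^j-1)} = A^{-2j}$. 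The crucial input now is that Lemma \ref{notation}(iii) realises the normaliser as a semidirect product, forcing ${\mathcal K}^{\times} \cap \langle A\rangle = \{I_n\}$, so both sides must separately equal $I_n$. From $A^{-2j} = I_n$ with $|A| = n$ I get $j \in \{0, n/2\}$; for $j = 0$ any $i$ works, and for $j = n/2$ the condition $S^{i(q^{n/2}-1)} = I_n$ combined with $q^n-1 = (q^{n/2}-1)(q^{n/2}+1)$ yields precisely $(q^{n/2}+1)\mid i$. The analogous computation $(S^i A^j \Tau^{-1})^{\tr} = A^{-j} S^i \Tau^{-1}$ reduces to the same equation $A^j S^i A^j = S^i$, giving the second equivalence and finishing the lemma.
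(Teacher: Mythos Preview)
Your proof is correct and follows essentially the same route as the paper's: the same matrix identities for (i)--(iii), the norm argument for (iv), the Lempel--Seroussi criterion via Lemma \ref{qf} for (v), Lemma \ref{basechange} for (vi)--(vii), and the reduction of (viii) to the equation $A^jS^i = S^iA^{-j}$. The only cosmetic difference is that in (viii) you invoke the trivial intersection ${\mathcal K}^{\times}\cap\langle A\rangle=\{I_n\}$ from the semidirect product structure, whereas the paper phrases the same step via the direct-sum decomposition ${\mathcal K}A^r\cap{\mathcal K}A^s=\{0\}$ for $r\not\equiv s\pmod n$, and you redo the transpose computation for $S^iA^j\Tau^{-1}$ where the paper simply inverts.
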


\begin{proof}
(i) This is clear as $A$ is a permutation matrix of a cycle of full length $n$ and
$n$ is even.
\\[1ex]
(ii)  This follows from Lemma \ref{notation} (ii).
\\[1ex]
(iii) The first statement is Remark \ref{vert} (i). To see the second
note that $A$ is a permutation matrix, so $A^{\tr} = A^{-1}$  and
$(\Tau A^{\ell})^{\tr} = (A^{\ell})^{\tr} \Tau   = A^{-\ell } \Tau  $.
\\[1ex]
(iv) Because $S$ generates ${\mathcal K}$ as a $k$-algebra
the minimal polynomial of $S$
is equal to its characteristic polynomial. Moreover it also coincides with the minimal polynomial
of a primitive element $\sigma \in \F_{q^n}$ over $\F_q$ since $S$ is a Singer cycle.
 Thus the determinant of $S$ is
the product of all Galois conjugates of $\sigma $, i.e. the norm of $\sigma $,
$$\det(S) = \sigma ^{(1+q+\ldots + q^{n-1})} = \sigma ^{(q^n-1)/(q-1)} .$$
As $\langle \sigma \rangle = \F_{q^n}^{\times }$ the order of $\sigma $ is $q^n-1$, so
the order of $\det (S)$ is $q-1$ which proves that $\det(S) $ is a primitive
element  in $\F_q$. In particular $\det(S) \in \F_q^{\times }\setminus (\F_q^{\times })^2$.
\\[1ex]
(v)
By Lemma \ref{qf}, there is a self-dual basis for $K/k$ if and only
if the determinant of the trace bilinear form is a square.
According to Lempel and Seroussi  \cite{LS} $K/k$ has
a self-dual basis if and only if $n$ is odd (since $q$ is odd).
As $n$ is assumed to be even,
 the determinant of $\Tau $ is a non-square.
\\[1ex]
(vi)  Lemma \ref{basechange} with $\epsilon _{\Beta }(\alpha \Beta )=S$ implies that $S^{\tr} = \Tau  S \Tau ^{-1}$, hence
$$(\Tau S)^{\tr}  = S^{\tr} \Tau ^{\tr} =  (\Tau  S \Tau ^{-1}) \Tau  = \Tau S .$$
(vii) This follows from (vi) because the inverse of a symmetric matrix is again symmetric.
\\[1ex]
(viii) Using the previous results we compute
$$(\Tau A^jS^i) ^{\tr} \stackrel{(ii)}{=} (\Tau S^{q^ji} A^j)^{\tr} \stackrel{(vi)}{=} A^{-j} \Tau  S^{q^ji}  
\stackrel{(iii)}{=} \Tau  A^{-j} S^{q^ji} \stackrel{(ii)}{=} \Tau  S^i A^{-j} $$
for all $0\leq j \leq n-1 $ and $1\leq i \leq q^n-1 $.
In particular $\Tau A^jS^i$ is symmetric if and only if
$\Tau A^jS^i = \Tau  S^i A^{-j} $. Dividing by $\Tau $ and using 2) we obtain the 
equivalent condition
$ S^{q^ji} A^j = S^i A^{-j} \in {\mathcal K }A^j \cap {\mathcal K}A^{-j} $.
Now ${\mathcal K} A^r \cap {\mathcal K} A^s \neq \{ 0 \} $ if and only if $r \equiv s \bmod n$.
So we obtain that $j \equiv -j \bmod n$, i.e. either $j=0$ and then $i$ is arbitrary,
or $j=n/2$ and $(S^i)^{q^{n/2} } = (S^i)$ (i.e. $ (q^{n/2} +1) \mid i  $).
The last statement follows by inverting the matrix.
\end{proof}

The next proposition follows by interpreting 
\cite[Lemma 1]{Berger} and \cite[Theorem 18]{Ravagnani} in our language. 
For convenience of the reader we give a direct elementary proof.

\begin{proposition}\label{dualmat}
${\mathcal G}_{n/2 ,\Gamma } ^{\perp} =
\Tau  A^{n/2} {\mathcal G}_{n/2 ,\Gamma } \Tau  ^{-1} $.
\end{proposition}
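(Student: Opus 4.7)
My plan is to prove the equality by combining a dimension count with a direct orthogonality computation. Both sides have $k$-dimension $n^2/2$: the right side because $\Tau$ and $A$ are invertible, so $\Tau A^{n/2}\mathcal{G}_{n/2,\Gamma}\Tau^{-1}$ has the same dimension as $\mathcal{G}_{n/2,\Gamma}$, which is $n\cdot(n/2) = n^2/2$ by Lemma \ref{notation}(vi); and the left side because duality halves the ambient dimension $n^2$. It therefore suffices to prove the inclusion $\Tau A^{n/2}\mathcal{G}_{n/2,\Gamma}\Tau^{-1} \subseteq \mathcal{G}_{n/2,\Gamma}^\perp$.

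For this inclusion, pick generating elements $c A^i$ and $\Tau A^{n/2} d A^j \Tau^{-1}$ with $c,d\in\mathcal{K}$ and $0\le i,j\le n/2-1$, using the decomposition $\mathcal{G}_{n/2,\Gamma} = \bigoplus_{i=0}^{n/2-1}\mathcal{K}A^i$ from Lemma \ref{notation}(vi). Expand the inner product $\trace(cA^i(\Tau A^{n/2} d A^j \Tau^{-1})^\tr)$ using $\Tau^\tr = \Tau$ and $A^\tr = A^{-1}$ (Lemma \ref{basic}(i)), and then use the commutation $A\Tau = \Tau A$ (Remark \ref{vert}(i), also Lemma \ref{basic}(iii)) to pull all $\Tau$'s together. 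The crucial move is then Lemma \ref{basechange}, which in the form $\Tau^{-1} d^\tr \Tau = d$ absorbs the transpose of $d$ back into $\mathcal{K}$. Once this is done, Lemma \ref{notation}(ii) gives $A^{i-j} d = d^{q^{i-j}} A^{i-j}$, and the inner product collapses to $\trace(c\, d^{q^{i-j}} A^{i-j-n/2})$ with $c\, d^{q^{i-j}}\in\mathcal{K}$.

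Since $0\le i,j \le n/2-1$ forces $-(n-1) \le i-j-n/2 \le -1$, the exponent $i-j-n/2$ is a nonzero residue modulo $n$, so Lemma \ref{notation}(iv) forces this trace to vanish. Hence the inclusion holds and the two codes coincide by dimension. The main obstacle is the careful bookkeeping in the inner product computation — pushing the transpose through a product of four matrices, commuting $\Tau$'s past $A$'s in the right order, and invoking Lemma \ref{basechange} at precisely the right step to collapse $\Tau^{-1}d^\tr \Tau$ back to $d$. The role of this identity is conceptually the key point: it expresses that the transpose on $\mathcal{K}\subseteq k^{n\times n}$ corresponds, via the embedding induced by the normal basis $\Gamma$, to the identity map up to conjugation by the Gram matrix of the trace form, which is exactly what allows the twist $\Tau A^{n/2}(\cdot)\Tau^{-1}$ to land in the dual code.
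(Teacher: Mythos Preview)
Your proof is correct and follows essentially the same approach as the paper's: a dimension count reduces to the inclusion, and the inner product of generators is shown to land in $\mathcal{K}A^r$ with $r\not\equiv 0\pmod n$, whence Lemma~\ref{notation}(iv) gives trace zero. The only cosmetic differences are that the paper absorbs $A^{n/2}$ into the index range (writing the second element as $\Tau S^{\ell}A^{j}\Tau^{-1}$ with $j\in\{n/2,\ldots,n-1\}$) and invokes Lemma~\ref{basic}(vi) (that $\Tau S^{\ell}$ is symmetric) rather than Lemma~\ref{basechange} directly---but the latter is exactly how (vi) is proved, so the underlying mechanism is identical.
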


\begin{proof}
We put ${\mathcal C}:=  \Tau  A^{n/2} {\mathcal G}_{n/2 ,\Gamma } \Tau  ^{-1} $.
As $$\dim({\mathcal C}) + \dim ({\mathcal G}_{n/2 ,\Gamma }) 
= 2 \frac{n}{2} n = n^2 = \dim (k^{n\times n})$$
it suffices to show that ${\mathcal C} \subseteq {\mathcal G}_{n/2 ,\Gamma } ^{\perp} $.
To see this recall that
${\mathcal G}_{n/2,\Gamma } = \bigoplus _{i=0}^{n/2-1} {\mathcal K} A^i $
where ${\mathcal K} = \{ 0 \} \cup \{ S^{\ell } \mid 0\leq \ell \leq q^n-1 \} $
and ${\mathcal K} A = A {\mathcal K}$.
So it is enough to show that for $i\neq j$ with $i,j \in \{ 0,\ldots , n-1 \}$
and all $m,\ell \in \{ 0,\ldots , q^n-1  \} $
$$\trace (S^mA^i (\Tau S^{\ell} A^j \Tau ^{-1} )^{\tr } ) = 0 .$$
Applying Lemma \ref{basic} (where the relevant parts are
indicated above the equalities) we compute
$$S^mA^i (\Tau S^{\ell} A^j \Tau ^{-1} )^{\tr } \stackrel{(i),(vi)}{=} 
S^mA^i \Tau ^{-1} A^{-j} \Tau S^{\ell } \stackrel{(iii)}{=} S^m A^{i-j} S^{\ell} \, {\in} \, {\mathcal K} A^{i-j} $$
where the last inclusion follows from Lemma \ref{basic} (ii).
If $i-j$ is not divisible by $n$, then Lemma \ref{notation} (iii) tells us that
all matrices in  $ {\mathcal K} A^{i-j}  $ have trace $0$.
\end{proof}

In particular ${\mathcal G}_{n/2 ,\Gamma } $ is always equivalent
to its dual code. 
We now apply Theorem \ref{eqsd} to obtain a criterion,
when ${\mathcal G}_{n/2 ,\Gamma }$ is equivalent to a self-dual code. 

\begin{theorem}\label{gabisd}
${\mathcal G}_{n/2 ,\Gamma }$ is equivalent to a self-dual $\MRD$ code 
if and only if  $n\equiv 2 \pmod{4}$ and $q\equiv 3 \pmod{4}$. 
\end{theorem}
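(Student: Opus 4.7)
The plan is to combine Theorem \ref{eqsd}, Proposition \ref{dualmat}, and Theorem \ref{properaut} to translate the question into one about square classes of determinants in $\F_q^{\times}$. By Corollary \ref{fullaut} the code $\mathcal{G}_{n/2,\Gamma}$ admits an improper automorphism, so equivalence to a self-dual code coincides with proper equivalence; by Proposition \ref{dualmat} and Theorem \ref{eqsd} this is in turn equivalent to finding symmetric $A',B'\in\GL_n(k)$ of square determinant with
\[
A'\,\mathcal{G}_{n/2,\Gamma}\,B' \;=\; \Tau A^{n/2}\,\mathcal{G}_{n/2,\Gamma}\,\Tau^{-1}.
\]
Rearranging, $((\Tau A^{n/2})^{-1}A',\,B'\Tau)$ induces a proper self-equivalence of $\mathcal{G}_{n/2,\Gamma}$, so Theorem \ref{properaut} produces $j\in\{0,\ldots,n-1\}$ and integers $a,b$ with $A'=\Tau A^{n/2+j}S^a$ and $B'=A^{-j}S^b\Tau^{-1}$.

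The next step is to impose the symmetry conditions via Lemma \ref{basic}(viii). Applied to $A'$, this forces $j\in\{0,n/2\}$, with $(q^{n/2}+1)\mid a$ when $j=0$ and $a$ unrestricted when $j=n/2$. For $B'$, I would first use Lemma \ref{basic}(ii) to rewrite $A^{-j}S^b$ in the form $S^c A^{-j}$ for a suitable integer $c$, and then apply the second form of Lemma \ref{basic}(viii). In each of the two cases $j\in\{0,n/2\}$, exactly one of $A'$, $B'$ ends up carrying the divisibility constraint $(q^{n/2}+1)\mid(\cdot)$ on its $S$-exponent while the other is free.

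I would then compute the determinants using Lemma \ref{basic}(i), (iv), (v), which give $\det(A)=-1$ and place both $\det(S)$ and $\det(\Tau)$ in $\F_q^{\times}\setminus(\F_q^{\times})^2$. The pivotal arithmetic fact is that $q^{n/2}+1$ is even (since $q$ is odd), so the constrained $S$-exponent is automatically even and contributes a square factor of $\det(S)$. Consequently, in both cases one of the two determinants lies in the same square class as $\det(\Tau A^{n/2})=(-1)^{n/2}\det(\Tau)$, which is a square precisely when $(-1)^{n/2}$ is a non-square in $\F_q^{\times}$---that is, when $n\equiv 2\pmod 4$ and $q\equiv 3\pmod 4$. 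The other determinant is governed by a free $S$-exponent of non-square determinant and can be adjusted to a square by choosing the exponent of the right parity, which also settles sufficiency. I expect the main obstacle to be not any single computation but the bookkeeping across the two cases $j\in\{0,n/2\}$ and through the conjugation relation used to normalise $B'$; once the parity of $q^{n/2}+1$ is isolated as the key arithmetic input, the characterization falls out.
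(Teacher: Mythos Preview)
Your proposal is correct and follows essentially the same route as the paper: reduce to proper equivalence via Corollary~\ref{fullaut}, parametrise all $(A',B')$ realising $\mathcal{G}_{n/2,\Gamma}^{\perp}=A'\mathcal{G}_{n/2,\Gamma}B'$ through Proposition~\ref{dualmat} and Theorem~\ref{properaut}, impose symmetry via Lemma~\ref{basic}(viii) to force $j\in\{0,n/2\}$, and then read off the square-class condition on the determinants using that $q^{n/2}+1$ is even. The only cosmetic difference is that the paper writes $Y_{h,j}=S^hA^{-j}\Tau^{-1}$ directly rather than passing through $A^{-j}S^b\Tau^{-1}$ and conjugating, but this is the same parametrisation.
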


\begin{proof}
Let $\delta := \det(S)$. Then, by Lemma \ref{basic} (iv),
$\delta \not\in (k^{\times })^2$.  \\[1ex]
By Proposition \ref{dualmat}, we have
$$
{\mathcal G}_{n/2 ,\Gamma } ^{\perp} =
\Tau  A^{n/2} {\mathcal G}_{n/2 ,\Gamma }  \Tau^{-1} .$$
From Theorem \ref{properaut} we hence obtain the set of proper equivalences
between ${\mathcal G}_{n/2,\Gamma }$ and ${\mathcal G}_{n/2 ,\Gamma } ^{\perp} $ as
$$\{ \kappa _{\Tau  A^{n/2} A^j S^i , S^h A^{-j}  \Tau^{-1} } \mid i,h \in \{ 0,\ldots , q^n-1 \} , j\in \{ 0,\ldots , n-1 \} \} .$$
According to Corollary \ref{fullaut}  all Gabidulin codes have
improper automorphisms. So if ${\mathcal G}_{n/2 ,\Gamma }$ is equivalent to a self-dual $\MRD $ code, then it is properly equivalent to
a self-dual $\MRD $ code. \\[1ex]
To use  Theorem \ref{eqsd} we hence need to decide for which triples $(i,h,j)$ both matrices
$$X_{i,j} := \Tau  A^{n/2} A^j S^i \mbox{ and } Y_{h,j} := S^h A^{-j}  \Tau^{-1} $$
are symmetric and of square determinant.
\\[1ex]
 By Lemma \ref{basic} (v) (note that we assume that $n$ is even),
$\det (X_{i,j}) \in (k^{\times })^2$ if and only if $(-1) ^{\frac{n}{2} + j} \delta^i \not\in (k^{\times })^2 $ and
$\det (Y_{h,j}) \in (k^{\times })^2$ if and only if $(-1)^{j} \delta ^h \not\in (k^{\times })^2$.
\\[1ex]
 By Lemma \ref{basic} (viii), the matrix $X_{i,j}$ is symmetric if and only if either $j=0$ and $(q^{n/2} +1) \mid i$
or $j=n/2$ and $i$ is arbitrary.
The matrix $Y_{h,j}$ is symmetric if and only if either $j=0$ and $h$ is arbitrary or $j=n/2$ and $h$ is a multiple of $(q^{n/2} +1) $.
\\[1ex]
 So  in particular $X_{i,j}$ and $Y_{h,j}$ are symmetric of square determinant  if and only if
either \begin{itemize}
\item[(a)] $(-1)^{n/2} \not\in (k^{\times })^2$ and $j=0, (q^{n/2}+1) \mid i,$ and $h$ is odd
\item[or]
\item[(b)]  $(-1)^{n/2} \not\in (k^{\times })^2$ and $j=n/2$, $(q^{n/2}+1) \mid  h$, and $i$ is odd.
\end{itemize}
These conditions can be satisfied
if and only if  $n\equiv 2 \pmod{4}$ and $q\equiv 3 \pmod{4}$.
\end{proof}

\noindent
{\bf Acknowledgement.} 
The ideas for this paper initiated during the ALCOMA15 conference.
The authors thank the organisers for their kind invitation. 
Part of the work was done during two visits of the second author to
the RWTH Aachen University in spring 2015 
financed by the RTG 1632 of the DFG.


\begin{thebibliography}{10}
\bibitem{Berger}
{\sc T.~Berger},
Isometries for rank distance and permutation group of Gabidulin codes.
In Proceedings of ACCT'8, St Petersbourg, Sept 2002, 30-33. 
\bibitem{Delsarte}
{\sc Ph.~Delsarte}, Bilinear Forms over a Finite Field with Applications to Coding Theory.
J. Comb. Theory A, 25 (1978) 226-241.
\bibitem{Gabidulin} {\sc E.~Gabidulin},  Theory of codes with maximum rank distance.
Problems Inf. Transmission, 21 (1985) 1-12.
\bibitem{Huppert} {\sc B.~Huppert}, Endliche Gruppen I, Springer Verlag 1967.
\bibitem{LS} {\sc A.~Lempel  and G.~Seroussi}, Factorization of symmetric matrices and trace-orthogonal bases in finite fields.
SIAM J. Comput. 9 (1980), 758-767.
\bibitem{LN}
{\sc R.~Lidl and H.~Niederreiter}, {\em Introduction to finite fields and their applications}, Revised edition,  Cambridge University Press, Cambridge 1994.
\bibitem{MorrisonThesis}
{\sc K.~Morrison}, 
Enumeration of Equivalence Classes of Self-Dual Matrix Codes.  Advances in Mathematics of Communication (to appear).
\bibitem{Morrison}
{\sc K.~Morrison}, Equivalence for Rank-Metric and Matrix Codes
and Automorphism Groups of Gabidulin Codes. IEEE Transactions on Information Theory 60 (2014) 7035-7046.
\bibitem{Ravagnani}
{\sc A.~Ravagnani},
Rank-metric codes and their duality theory.  Designs, Codes, and Cryptography, April 2015,
DOI 10.1007/s10623-015-0077-3
\bibitem{Scharlau} {\sc W. Scharlau}, 
{\em Quadratic and Hermitian Forms}, Grundlehren der mathematischen Wissenschaften 270, 
Springer-Verlag Berlin Heidelberg New York Tokyo 1985.
\bibitem{Wan} {\sc Z.-X.~Wan}, Geometry of matrices. In memory of Professor L. K. Hua (1910-1985), World Scientific, Singapore 1996.


\end{thebibliography}
\end{document}